\appto\appendix{\addtocontents{toc}{\protect\setcounter{tocdepth}{1}}}
\appto\listoffigures{\addtocontents{lof}{\protect\setcounter{tocdepth}{1}}}
\appto\listoftables{\addtocontents{lot}{\protect\setcounter{tocdepth}{1}}}
\numberwithin{equation}{section}
\title{Classification of locality preserving symmetries on spin chains}
\author[1]{Alex Bols \thanks{email: \href{abols01@phys.ethz.ch}{abols01@phys.ethz.ch}}}
\author[2]{Wojciech De Roeck \thanks{email: \href{wojciech.deroeck@kuleuven.be} {wojciech.deroeck@kuleuven.be}}}
\author[2]{Michiel De Wilde \thanks{email: \href{michiel.dewilde@student.kuleuven.be}{michiel.dewilde@student.kuleuven.be}}}
\author[2]{Bruno de O. Carvalho \thanks{email: \href{bruno.oliveira@kuleuven.be}{bruno.oliveira@kuleuven.be}}}
\affil[1]{Institute for Theoretical Physics, ETH Z{\"u}rich}
\affil[2]{Instituut voor Theoretische Fysica, KU Leuven}
\begin{document}

\date{\today}
\maketitle

\begin{abstract}
We consider the action of a finite group $G$ by locality preserving automorphisms (quantum cellular automata) on quantum spin chains.  We refer to such group actions as ``symmetries''. 
The natural notion of equivalence for such symmetries is \emph{stable equivalence}, which allows for stacking with factorized group actions. Stacking also endows the set of equivalence classes with a group structure.   
We prove that the anomaly of such symmetries provides an isomorphism between the group of stable equivalence classes of symmetries with the cohomology group $H^3(G,U(1))$, consistent with previous conjectures. This amounts to a complete classification of locality preserving symmetries on spin chains. We further show that a locality preserving symmetry is stably equivalent to one that can be presented by finite depth quantum circuits with covariant gates if and only if the slant product of its anomaly is trivial in $H^2(G, U(1)[G])$.
\end{abstract}

\tableofcontents

\section{Introduction} \label{sec:introduction}

Dynamics in many-body quantum physics is typically generated by a local Hamiltonian, and  therefore, due to Lieb-Robinson bounds \cite{lieb1972finite},  it  \emph{preserves locality}. 
Such locally generated evolutions may be thought of as topologically trivial locality preserving automorphisms of the observable algebra. Indeed, they are contracted to the identity by reducing the evolution time. In contrast, the \emph{shift} on a spin chain is an example of a \emph{topologically non-trivial} locality preserving automorphism \cite{Gross_2012}. 

Non-trivial locality preserving automorphisms appear in various guises in the study of topological phases of strongly interacting quantum matter, and often serve to characterize and even classify the phases under investigation \cite{hastings2013classifying}. Examples include the appearance at stroboscopic times of the shift on the boundary of many-body localized Floquet insulators \cite{po2016chiral, else2016classification, zhang2023bulk}, the exotic symmetries appearing on the boundaries of topological matter \cite{kitaev2012models, jones2023local, molnar2022matrix}, and the equivariant automorphisms that can entangle symmetry protected trivial (SPT) phases \cite{chen2013symmetry, zhang2023bulk}. These connections have motivated a growing body of work that aims to understand the topological phases of locality preserving automorphisms, possibly in the presence of symmetry \cite{Gross_2012, freedman2020classification, freedman2022group, cirac2017matrix, gong2020classification, ranard2022converse, kapustin2024anomalous, bols2021classification}. 

In this paper we study representations of finite groups $G$ by locality preserving automorphisms on spin chains \cite{rubio2024classifying, garre2023classifying}. Such representations can be regarded as the group case of categorical symmetries on spin chains \cite{feiguin2007interacting, lootens2023dualities, lootens2024dualities}. They also arise at the boundaries of two-dimensional SPTs, whose bulk invariant manifests itself as an $H^3(G, U(1))$-valued \emph{anomaly} of the boundary symmetry \cite{else2014classifying}. We prove the folk knowledge that this anomaly classifies locality preserving symmetries on spin chains up to \emph{stable equivalence}. That is, up to conjugation by finite depth quantum circuits and stacking with factorized group actions.

Our proof proceeds by first asking which locality preserving symmetries on spin chains admit restrictions to right half-lines that are
\begin{enumerate}[label=(\arabic*)]
    \item themselves locality preserving symmetries,
    \item covariant with respect to the full symmetry.
\end{enumerate}
The anomaly is an obstruction to (1). We introduce in addition a new obstruction to (2), called the obstruction to covariant right restrictions, which takes values in twisted group cohomology $H^2(G, U(1)[G])$. We then show that a symmetry with trivial anomaly also has trivial obstruction to covariant right restrictions, and that the existence of right restrictions that satisfy (1) and (2) simultaneously implies that symmetries with trivial anomaly can be decoupled. The solution of the classification problem then follows from the fact that the anomaly and the obstruction to covariant right restrictions are constant on stable equivalence classes of symmetries.

As a corollary, we show that the obstruction to covariant right restrictions is given by the inverse of the \emph{slant product} of the anomaly. Having a good handle on this quantity is significant because it plays an important role in characterizing the anyon content of the gauged bulk SPT corresponding to the boundary symmetry under consideration \cite{wang2015topological, dijkgraaf1990topological}. Note in particular that any symmetry which admits covariant right restrictions can be presented by finite depth quantum circuits with \emph{covariant} gates, a highly non-trivial property.

The paper is structured as follows. In Section \ref{sec:setup}, we introduce locality preserving symmetries on spin chains and stable equivalence between them, and state our main Theorem. We define the anomaly in Section \ref{sec:anomaly}, and state its basic properties. In Section \ref{sec:examples}, we construct for each element of $H^3(G, U(1))$ an explicit symmetry with that element as its anomaly. In Section \ref{sec:right restrictions}, we define the obstruction to covariant right restrictions and show that it vanishes for symmetries with trivial anomaly. This fact is then used in Section \ref{sec:proof of main thm} to prove the main Theorem.  Appendix \ref{app:cohomology} collects basic definitions of group cohomology. Basic properties of the anomaly and of the obstruction to covariant right restrictions are proved in Appendices \ref{app:proof of anomaly proposition} and \ref{app:proof of proposition covariant obstruction} respectively. In Appendix \ref{app:general_covariance} we show that the obstruction to covariant right restrictions is given by the slant product of the anomaly. Finally, Appendix \ref{app:necessity of ancillas} presents an example which shows that stable equivalence is needed in order for the classification by the anomaly to hold.\\

\noindent\textbf{Note : } During the preparation of this manuscript, the preprint \cite{seifnashri2025disentangling} appeared, in which similar results are obtained. In particular, the \emph{disentangler} $\mathcal W$ constructed in \cite{seifnashri2025disentangling} yields a  proof of our Proposition \ref{prop:trivial anomaly implies stably equivalent to decoupled symmetry}. 
\\

\noindent \textbf{Acknowledgments : }
W.D.R. and B.O.C. were supported by the FWO (Flemish Research Fund) grant G098919N, 
the FWO-FNRS EOS research project G0H1122N EOS 40007526 CHEQS, the KULeuven  Runners-up grant iBOF DOA/20/011, and the internal KULeuven grant C14/21/086.
\section{Setup and main result} \label{sec:setup}

\subsection{Spin chains, quantum cellular automata, and finite depth quantum circuits} \label{subsec:spin chains, QCA, and FDQC}

A spin chain $C^*$-algebra $\caA$ is defined in the standard way, that we recall now.
To any site $j \in \bbZ$, we associate an $d_j$-dimensional on-site Hilbert space $\C^{d_j}$, with associated matrix algebra $\caA_j \simeq \End(\C^{d_j})$. 
We assume that there is a $d_{\max}$ such that $d_j \leq d_{\max}$. 
The algebra $\caA_j \simeq \End(\C^{d_j})$ is equipped with its natural operator norm and $*$-operation (Hermitian adjoint of a matrix) making it into a $C^*$-algebra.  The spin chain algebra $\caA$ is the inductive limit of 
algebras $\caA_S=\otimes_{j\in S} \caA_j$, with  $S$ a finite subset of $\Z$. It comes naturally equipped with local subalgebras $\caA_X, X\subset \Z$. We refer to standard references \cite{bratteliII,simon2014statistical,naaijkens2017quantum,brunoamandaI,nachtergaele.sims.ogata.2006} for more background and details.
We will write $\caA_{\geq j}$ for $\caA_{[j, \infty)}$ and $\caA_{< j}$ for $\caA_{(-\infty, j-1]}$. We will usually refer to the quasi-local algebra $\caA$ itself as the \textit{spin chain}, it being understood that there is a fixed preferred assignment of on-site algebras $j \mapsto \caA_j \subset \caA$ to sites of $\Z$.


For any $\Gamma \subset \Z$ we write $\Gamma^{(r)} := \{ j \in \Z \, : \, \dist(j, \Gamma) \leq r \}$
 for the \textit{$r$-fattening} of $\Gamma$. A \emph{quantum cellular automaton} (QCA) on a spin chain $\caA$ is a $*$-automorphism $\al : \caA \rightarrow \caA$ for which there exists $r \geq 0$ such that $\alpha(\caA_{X}) \subset  \caA_{X^{(r)}} $ for any $X\subset \Z$.

The range of a QCA is the smallest $r$ for which this holds. The inverse of a QCA of range $r$ is also a QCA of range $r$ (\cite[Lemma 3.1]{freedman2020classification}), a fact which we will use without comment throughout the paper. The quantum cellular automata on $\caA$ form a subgroup of $\Aut(\caA)$ which we denote by $\QCA(\caA)$.

Let $\{ I_a \}_{a \in \Z}$ be a partition of $\Z$ into intervals $I_a \subset \Z$ of bounded size. Suppose we have for each $a \in \Z$ a unitary $U_a \in \caA_{I_a}$, then we can define a QCA $\beta$ by the formal infinite product
$$
\beta=\bigotimes_{a \in \Z} \Ad(U_a).
$$
This yields a well-defined automorphism, as one can first define its action on $\caA_X$ with finite $X$ and then extend by density. 
 Any QCA of this form is called a \emph{block partitioned} QCA. The intervals $I_a$ are called the blocks of the block partitioned QCA, and $\abs{I_a}$ is the size of block $I_a$. The unitaries $U_a$ are called \emph{gates}. The composition of $n$ block partitioned QCAs is called a depth $n$ quantum circuit, or simply a \emph{finite depth quantum circuit} (FDQC).

\subsection{Locality preserving symmetries} \label{subsec:locality preserving symmetries}

Let $G$ be a finite group which will be fixed throughout the paper. We write $\bar g = g^{-1}$ for the inverse of any group element $g \in G$. A \emph{locality preserving symmetry} on $\caA$ is a group homomorphism $\al : G \rightarrow \QCA(\caA)$. That is, for each $g \in G$ we have a quantum cellular automaton $\al^{(g)}$ such that $\al^{(1)} = \id$ and $\al^{(g)} \circ \al^{(h)} = \al^{(gh)}$ for all $g, h \in G$. The range of a locality preserving symmetry is the largest range of its component QCAs. We say a locality preserving symmetry $\al$ is \emph{decoupled}
iff. every $\al^{(g)}, g\in G$ is a block-partitioned QCA (as defined above) where the blocks can be chosen to be $g$-independent.  Alternatively, this means that we can write 
 formally $\al = \bigotimes_{a \in \Z} \, \al_a$ for symmetries $\al_a$ supported on the blocks $I_a$. In particular, any symmetry of range 0 is decoupled.

In the rest of this work we will refer to locality preserving symmetries simply as symmetries. If we want to specify the group $G$ then we speak of $G$-symmetries. We denote the set of all $G$-symmetries on arbitrary spin chains by $\Sym_{G}$.

\subsection{Equivalence and stable equivalence} \label{subsec:stable equivalence}

Two symmetries $\al$ and $\al'$ are \textit{equivalent} if there is a FDQC $\gamma$ such that $\al'^{(g)} = \gamma^{-1} \circ \al^{(g)} \circ \gamma $ holds for all $g\in G$. In that case we write $\al' \sim_0 \al$.

The \emph{stack} of two spin chains $\caA$ and $\caB$ is the spin chain $\caA \otimes \caB$ with on-site algebras $(\caA \otimes \caB)_j = \caA_j \otimes \caB_j$ for all $j \in \Z$. If $\al$ and $\beta$ are symmetries on the spin chains $\caA$ and $\caB$ respectively, then we can \emph{stack} them to obtain the symmetry $\al \otimes \beta$ on $\caA \otimes \caB$ with components $(\al \otimes \beta)^{(g)} = \al^{(g)} \otimes \beta^{(g)}$ for all $g \in G$.

Two symmetries $\al$ and $\al'$ are \textit{stably equivalent}, denoted by $\al \sim \al'$, if there exists symmetries $\beta $ and $\beta'$ of range $0$ such that $\al \otimes \beta \sim_0 \al' \otimes \beta'$. Stable equivalence is an equivalence relation on $\Sym_G$, and $(\Sym_G / \sim)$ is an abelian monoid with multiplication induced by stacking. (We will show later that it is in fact a group, \ie there are inverses.) 
It is easy to check that any decoupled symmetry is stably equivalent to a symmetry of range zero. This implies also that $\alpha,\alpha'$ are stably equivalent whenever there exist decoupled symmetries $\beta,\beta'$ such that $\al \otimes \beta \sim_0 \al' \otimes \beta'$. This fact will be used throughout the paper without further mention.


\subsection{Main result} \label{subsec:main result}

\begin{theorem} \label{thm:classification}
    The monoid $(\Sym_G/\sim)$ is in fact a group. There is a map $\Omega : \Sym_G \rightarrow H^3(G, U(1))$ which assigns to each symmetry $\al$ a 3-cohomology class, which we will call its anomaly, and which lifts to an isomorphism of groups $(\Sym_G / \sim) \cong H^3(G, U(1))$.

    In particular, two $G$-symmetries $\al$ and $\beta$ are stably equivalent if, and only if, their anomalies are equal: $$\al \sim \beta \, \iff \, \Omega(\al) = \Omega(\beta).$$ Moreover, for each $[\omega] \in H^3(G, U(1))$ there exists a symmetry whose anomaly is $[\omega]$.
\end{theorem}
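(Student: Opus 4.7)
The plan is to prove Theorem~\ref{thm:classification} by showing that the anomaly descends to a well-defined monoid homomorphism $\bar{\Omega} : (\Sym_G/\sim) \to H^3(G, U(1))$, and then establishing surjectivity and injectivity separately. Once $\bar{\Omega}$ is a bijection of a monoid onto a group, the monoid $(\Sym_G/\sim)$ automatically inherits the group structure, proving the first assertion of the theorem for free.

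First I would verify descent and additivity: that conjugation by a FDQC and stacking with a decoupled symmetry leave the class $\Omega(\al) \in H^3(G, U(1))$ unchanged, and that $\Omega(\al \otimes \beta) = \Omega(\al) + \Omega(\beta)$. These are structural properties of the anomaly cocycle, to be proved in Section~\ref{sec:anomaly} and Appendix~\ref{app:proof of anomaly proposition}; once invoked, $\bar{\Omega}$ is a well-defined monoid homomorphism. Surjectivity is then handed to us by Section~\ref{sec:examples}, which exhibits, for each $[\omega] \in H^3(G,U(1))$, an explicit symmetry with anomaly $[\omega]$. Injectivity is the substantive claim: that any $\al \in \Sym_G$ with $\Omega(\al) = 1$ is stably equivalent to a decoupled symmetry (and hence, after absorbing its range-zero form by stacking, to the identity symmetry). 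Here I would proceed in two steps exactly as previewed in the introduction. Step one: invoke the result of Section~\ref{sec:right restrictions} that $\Omega(\al) = 1$ forces the obstruction to covariant right restrictions in $H^2(G, U(1)[G])$ to vanish, so that—after passing to a stably equivalent representative—there exists a QCA $\al^R$ supported on $\caA_{\geq 0}$ which is itself a $G$-symmetry and is covariant under the full action of $\al$. Step two: use $\al^R$ as a \emph{disentangler}. Writing $\al^L := \al \circ (\al^R)^{-1}$, covariance forces $\al^L$ to be a $G$-symmetry whose components are supported near the left half-line $\caA_{<0}$, so that $\al$ decouples, up to a FDQC conjugation, as a product of symmetries on the two half-lines. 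A standard coarse-graining or iteration then produces a globally decoupled symmetry, giving $\al \sim \id$.

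The main obstacle I expect is step one: upgrading the cohomological vanishing of the obstruction class to the existence of an actual covariant QCA $\al^R$ on $\caA_{\geq 0}$. Concretely one must realize the chosen $2$-cocycle coboundary by genuine on-site unitaries that conjugate a naive right restriction into a true group homomorphism while remaining covariant, and this requires controlling the combined effect of unitary coboundaries and FDQC conjugations, together with stacking by a decoupled symmetry that supplies the necessary ancillas—the point that makes stable, rather than strict, equivalence indispensable (as highlighted by Appendix~\ref{app:necessity of ancillas}). The identification of the obstruction with the slant product of $\Omega$ (Appendix~\ref{app:general_covariance}) is what makes this upgrade feasible, since it links the cohomological input to the concrete data of the anomaly cocycle. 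Once injectivity is secured, $\bar{\Omega}$ is an isomorphism of monoids onto the group $H^3(G, U(1))$, and all statements of the theorem follow.
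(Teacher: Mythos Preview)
Your overall architecture matches the paper's: Proposition~\ref{prop:anomaly} gives descent and multiplicativity, Section~\ref{sec:examples} gives surjectivity, and injectivity is reduced to Proposition~\ref{prop:trivial anomaly implies stably equivalent to decoupled symmetry} via the covariant right restrictions of Section~\ref{sec:right restrictions}. Two points deserve correction, however.

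First, the slant product identification in Appendix~\ref{app:general_covariance} is a \emph{consequence} of Theorem~\ref{thm:classification}, not an input to it. The paper proves $\Omega(\al)=[1]\Rightarrow\Lambda(\al)=[1]$ directly (Lemma~\ref{lem:trivial anomaly implies trivial obstruction}) using only local computability: a right restriction that is a group homomorphism agrees with $\al$ far to the right and with $\id$ far to the left, so $\Lambda(\al)=\Lambda(\id)$. Invoking the slant product here would be circular.

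Second, and more substantively, you have inverted where the difficulty lies. Your ``step one'' is exactly Lemma~\ref{lem:trivial anaomly allows covariant left restrictions that are group homs} and is not the main obstacle. The hard part is your ``step two'', which you dismiss as ``a standard coarse-graining or iteration''. Splitting once gives $\al=\al^L\circ\al^R$ with $\al^L,\al^R$ group homomorphisms that commute \emph{as automorphisms}; iterating yields local pieces $\al_j$ on overlapping intervals that pairwise commute as automorphisms---this is the paper's situation after Eq.~\eqref{eq:consequence of covariance}. But commuting automorphisms do not give a tensor decomposition: the implementing unitaries $U_j(g)$ only commute up to phases $\chi_j(g,h)$ (Eq.~\eqref{eq: unitaries up to phase}), and no FDQC conjugation alone removes this. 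The paper needs two further rounds of stacking: one with conjugate projective representations to kill the $\chi_j$ and obtain genuinely commuting unitaries $\widetilde U_j$, and a second with regular representations to build the conjugating circuit $\gamma=\prod_j\Ad[V_j]$ that makes each block act only on its ancilla. This is the technical heart of Proposition~\ref{prop:trivial anomaly implies stably equivalent to decoupled symmetry} and is not a routine step.
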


This theorem is proven at the end of Section \ref{sec:proof of main thm}.

\begin{remark}
    In Appendix \ref{app:necessity of ancillas} we describe a symmetry $\al$ with trivial anomaly $\Omega(\al) = [1]$ which is nevertheless not equivalent to a decoupled symmetry. This shows that the notion of stable equivalence is indeed necessary for the classification by the anomaly to hold.
\end{remark}

\section{The anomaly of a locality preserving symmetry} \label{sec:anomaly}

The idea behind the definition of the anomaly presented here goes back to \cite{else2014classifying}. In order to define the anomaly we first note that the component QCAs of any locality preserving symmetry are finite depth quantum circuits \cite{zhang2023topological}.

\begin{lemma} \label{lem:LPSs are FDQCs}
    Let $\al : G \rightarrow \QCA(\caA)$ be a symmetry of range $R$ on a spin chain $\caA$. Then each $\al^{(g)}$ can be written as a depth two quantum circuit whose blocks all have size at most $2R$.
\end{lemma}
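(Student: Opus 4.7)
The plan is to invoke the GNVW structure theorem for one-dimensional QCAs \cite{Gross_2012}: once the QCA is nearest-neighbor with respect to a block partition, it automatically admits a brick-wall decomposition of depth two.

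First, I would partition $\Z$ into consecutive cells $C_a := [aR, (a+1)R - 1]$ of size $R$. Since $\al^{(g)}$ has range $R$, each image $\al^{(g)}(\caA_{C_a})$ sits inside $\caA_{C_{a-1} \cup C_a \cup C_{a+1}}$, so $\al^{(g)}$ is nearest-neighbor with respect to this cellular structure. On each pair of adjacent cells $C_a \cup C_{a+1}$, I would then analyze the support algebras of $\al^{(g)}(\caA_{C_a})$ and $\al^{(g)}(\caA_{C_{a+1}})$ on their common double cell. The structural fact driving the argument is that inside a full matrix algebra, two mutually commuting matrix subalgebras that together generate the whole induce a tensor factorization; applied to the pieces of $\al^{(g)}(\caA_{C_a})$ and $\al^{(g)}(\caA_{C_{a+1}})$ that overlap in $\caA_{C_a \cup C_{a+1}}$, this yields a well-defined splitting of each double cell into a "right overhang" of cell $a$ and a "left overhang" of cell $a+1$.

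From this tensor structure I would extract unitaries $u_a \in \caA_{C_{2a} \cup C_{2a+1}}$ forming an "even" layer of blocks and unitaries $v_a \in \caA_{C_{2a-1} \cup C_{2a}}$ forming an "odd" layer, each block having support size $2R$, such that conjugation by $\bigl(\bigotimes_a u_a\bigr)\bigl(\bigotimes_a v_a\bigr)$ reproduces $\al^{(g)}$. The two block partitions are shifted with respect to each other by a single cell of width $R$, which is what gives the required depth-two presentation with blocks of size at most $2R$.

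The main obstacle is the support-algebra bookkeeping on each double cell, in particular verifying that the extracted subalgebras are full matrix factors of the correct dimensions and that the reconstructed two-layer circuit reproduces $\al^{(g)}$ exactly rather than only up to a residual automorphism. This is the technical heart of the GNVW structure theorem; granting that theorem as a black box, the cellular partition above immediately delivers the claimed depth-two quantum circuit with all blocks of size at most $2R$.
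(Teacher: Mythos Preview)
Your proposal has a genuine gap: you implicitly assume that every nearest-neighbor QCA admits a depth-two brick-wall decomposition, but this is false. The shift on a spin chain is a range-$1$ QCA, hence nearest-neighbor with respect to your cellular partition, yet it is \emph{not} a finite-depth circuit at all. The GNVW structure theorem does not say that a nearest-neighbor QCA is automatically a depth-two circuit; it says that any such QCA decomposes as a depth-two circuit composed with a (partial) shift, and the shift part is governed precisely by the GNVW index. In your support-algebra analysis the obstruction would surface when you try to ``extract unitaries $u_a$ and $v_a$'': the overhang factors you produce need not have dimensions matching the original cell algebras, and without that match there is no unitary implementing the required isomorphism inside each double cell.

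What is missing is exactly the ingredient the paper supplies: the group hypothesis. Since $G$ is finite, each $g$ has finite order $n$, so $\big(\al^{(g)}\big)^n = \id$. Multiplicativity of the GNVW index then forces $\ind(\al^{(g)})^n = 1$, hence $\ind(\al^{(g)}) = 1$. Only after this step does \cite[Theorem~9]{Gross_2012} yield the depth-two circuit with blocks of size at most $2R$. Nothing in your outline uses that $\al$ is a group homomorphism from a finite group, so as written the argument would apply equally to the shift and must therefore be incomplete.
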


\begin{proof}
    For each $g \in G$ we have a QCA $\al^{(g)}$ on the spin chain $\caA$. To any such QCA one can assign its $\Q$-valued GNVW index $\ind( \al^{(g)} ) \in \Q$, see \cite{Gross_2012}. Since $G$ is a finite group, $g$ has finite order. \ie there is an $n$ such that $g^n = 1$. Since the GNVW index is multiplicative under composition of QCAs and $\ind(\id) = 1$, this implies that $\ind( \al^{(g)} )^n = \ind( \al^{(g^n)} ) = \ind( \al^{(1)} ) = \ind( \id ) = 1$ and therefore $\al^{(g)}$ has trivial GNVW index. The claim now follows from~\cite[Theorem 9]{Gross_2012}.
\end{proof}

Let $\al$ be a symmetry of range $R$. A \emph{right restriction} $\al_{\geq j}$ of $\al$ at $j \in \Z$ with \emph{defect size} $L$ is a family of automorphisms $\al_{\geq j}^{(g)}$ such that for any $g\in G$ 
$$
\al_{\geq j}^{(g)} |_{\caA_{\geq (j + L)}} = \al^{(g)}|_{\caA_{\geq (j + L)}} \, \text{ and } \,\, \al_{\geq j}^{(g)}|_{\caA_{< j-L}} = \id_{\caA_{< j-L}}.
$$
It follows immediately from Lemma \ref{lem:LPSs are FDQCs} that any symmetry of range $R$ admits right restrictions at all sites with defect size $2R$.

Given a right restriction $\al_{\geq j}$ of defect size $L$, there are local unitaries $\Fusion_{j}(g, h) \in \caA_{[j - L, j + L + R]}$, called \emph{fusion operators} associated to $\al_{\geq j}$, such that
$$
\al_{\geq j}^{(g)} \circ \al_{\geq j}^{(h)} = \Ad[\Fusion_{j}(g, h)] \circ \al_{\geq j}^{(gh)}.
$$
These unitaries are uniquely determined by this equation up to phase. They capture the failure of $g \mapsto \al_{\geq j}^{(g)}$ to be a group homomorphism.

Using associativity to compute $\al_{\geq j}^{(f)} \circ \al_{\geq j}^{(g)} \circ \al_{\geq j}^{(h)}$ in two different ways one obtains
$$
\Ad \left[ \Fusion_{j}(f, g) \Fusion_{j}(fg, h) \right] \circ \al_{\geq j}^{(fgh)} = \Ad \left[ \al_{\geq j}^{(f)} \big( \Fusion_{j}(g, h) \big) \Fusion_{j}(f, gh) \right] \circ \al_{\geq j}^{(fgh)}.
$$
It follows that there are phases $\omega_{j}(f, g, h) \in U(1)$ such that
\begin{equation} \label{eq-fusion.operators}
\Fusion_{j}(f, g) \Fusion_{j}(fg, h) = \omega_{j}(f, g, h) \times \al_{\geq j}^{(f)} \big( \Fusion_{j}(g, h) \big) \Fusion_{j}(f, gh)
\end{equation}
for all $f, g, h \in G$.

\begin{proposition} \label{prop:anomaly}
    The map $\omega_{j} : G^3 \rightarrow U(1)$ is a 3-cocycle,
    $$ 1 = \frac{\omega_{j}(g, h, k) \omega_{j}(f, gh, k) \omega_{j}(f, g, h)}{\omega_{j}(fg, h, k) \omega_{j}(f, g, hk)},$$
    and the corresponding group cohomology class $[\omega_{j}] \in H^3(G, U(1))$ depends only on the symmetry $\al$, i.e. the cohomology class is independent of the site $j$ and the choice of right restriction $\al_{\geq j}$. We thus obtain a well defined map $$\ano : \Sym_G \rightarrow H^3(G, U(1))$$ which we call the anomaly. If $\ano(\al) = [1]$ is the identity element of $H^3(G, U(1))$, then we say that $\al$ has trivial anomaly.

    Moreover, for symmetries $\al$ and $\beta$ of range $R$ we have
    \begin{enumerate}
        \item \label{propitem:anomaly decoupled implies trivial} If $\al$ is decoupled then $\ano(\al) = [1]$ is the identity element of $H^3(G, U(1))$. 
        \item \label{propitem:anomaly locally computable} The anomaly is locally computable: If $\al$ and $\beta$ act on the same spin chain $\caA$ and there is an interval $I$ of length $8R +1$ such that $\al|_{\caA_I} = \beta|_{\caA_I}$ then $\ano(\al) = \ano(\beta)$. 
        \item \label{propitem:anomaly multiplicative} The anomaly is multiplicative under stacking: $\ano(\al \otimes \beta) = \ano(\al) \cdot \ano(\beta)$.
        \item \label{propitem:anomaly constant on stable equivalence classes} The anomaly is constant on stable equivalence classes: $\al \sim \beta \implies \ano(\al) = \ano(\beta)$.  
    \end{enumerate}
    In particular, the anomaly lifts to a homomorphism of monoids $\ano : (\Sym_G / \sim) \rightarrow H^3(G, U(1))$.
\end{proposition}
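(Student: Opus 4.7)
The plan is to first derive the 3-cocycle identity via a four-fold associator computation, then establish that the class $[\omega_j]$ is independent of all auxiliary data, and finally verify the four listed properties (1)--(4).

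For the cocycle identity I would expand the product $\Fusion_j(f,g)\Fusion_j(fg,h)\Fusion_j(fgh,k)$ in two different ways using equation (\ref{eq-fusion.operators}). Reassociating the leftmost pair first and applying (\ref{eq-fusion.operators}) three times yields the scalar factor $\omega_j(f,g,h)\,\omega_j(f,gh,k)\,\omega_j(g,h,k)$, while reassociating the rightmost pair first and applying (\ref{eq-fusion.operators}) twice, together with the intertwining relation $\Fusion_j(f,g)\,\al_{\geq j}^{(fg)}(x) = \al_{\geq j}^{(f)}\al_{\geq j}^{(g)}(x)\,\Fusion_j(f,g)$ (which is just the defining relation of $\Fusion_j$), yields $\omega_j(fg,h,k)\,\omega_j(f,g,hk)$. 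Equating the two expressions gives the 3-cocycle condition.

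Well-definedness of $[\omega_j]$ requires insensitivity to (i) the phases chosen for the $\Fusion_j(g,h)$, (ii) the defect size $L$, (iii) the choice of right restriction $\al_{\geq j}$, and (iv) the site $j$. Item (i) is standard: a rephasing by $\lambda_j(g,h) \in U(1)$ modifies $\omega_j$ by the coboundary $d\lambda_j$. Item (ii) is immediate because fusion operators for a smaller defect size remain valid for a larger one. For (iii), two right restrictions $\al_{\geq j}$ and $\tilde \al_{\geq j}$ of the same defect size differ by the automorphism $\al_{\geq j}^{(g)} \circ (\tilde \al_{\geq j}^{(g)})^{-1}$, which acts as the identity outside a bounded interval around $j$ and is therefore inner, $\al_{\geq j}^{(g)} = \Ad[V_j(g)] \circ \tilde \al_{\geq j}^{(g)}$ for some local unitary $V_j(g)$; substituting into (\ref{eq-fusion.operators}) exhibits the new cocycle as the old one multiplied by a coboundary built from the $V_j(g)$. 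Item (iv) reduces to (iii) by padding a right restriction at $j$ with local unitaries implementing $\al^{(g)}$ on the sites between $j$ and $j'$ to manufacture a right restriction at $j'$.

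For the listed properties, (1) follows because a decoupled $\al$ admits a right restriction that is itself a symmetry (retain the block symmetries lying to the right of $j$), so one may take $\Fusion_j \equiv 1$. For (2), choosing $j$ at the center of $I$ and defect size $L = 2R$, the fusion operators sit inside $[j-L,\,j+L+R] \subset I$, and right restrictions of $\al$ and $\beta$ may be chosen to coincide on $I$, making the two cocycles literally equal. For (3), tensor product restrictions have tensor product fusion operators $\Fusion_j^\al(g,h) \otimes \Fusion_j^\beta(g,h)$, giving multiplicativity. For (4), combining (1) and (3) reduces the claim to invariance under FDQC conjugation $\al \mapsto \gamma^{-1}\al\gamma$, which follows because $\gamma^{-1} \circ \al_{\geq j}^{(g)} \circ \gamma$ is a right restriction of $\gamma^{-1}\al^{(g)}\gamma$ (with defect size enlarged by the range of $\gamma$) whose fusion operators $\gamma^{-1}(\Fusion_j(g,h))$ produce the same cocycle. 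I expect the main obstacle to be item (iii): one must convert the abstract inner automorphism $\al_{\geq j}^{(g)} \circ (\tilde \al_{\geq j}^{(g)})^{-1}$ into explicit local unitaries $V_j(g)$ and propagate them carefully through (\ref{eq-fusion.operators}) so as to identify the resulting change in $\omega_j$ as a genuine coboundary; this bookkeeping then reappears both in (iv) and in the FDQC-invariance needed for (4).
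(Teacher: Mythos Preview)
Your proposal is correct. For the 3-cocycle identity and the independence of $[\omega_j]$ from the auxiliary choices (i)--(iv), the paper does not give its own argument but simply cites \cite{else2014classifying}; your explicit pentagon computation and coboundary bookkeeping are the standard ones. Items (1)--(3) match the paper's proofs essentially verbatim.

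The one substantive difference is item (4). You argue directly that $\gamma^{-1}\circ\al_{\geq j}^{(g)}\circ\gamma$ is a right restriction of $\gamma^{-1}\al^{(g)}\gamma$ (with defect size enlarged by the range of $\gamma$) and that its fusion operators are $\gamma^{-1}(\Fusion_j(g,h))$; applying $\gamma^{-1}$ to both sides of (\ref{eq-fusion.operators}) then reproduces the identical cocycle $\omega_j$. The paper instead decomposes the FDQC as $\gamma = \gamma_L\circ\gamma_R$ with $\gamma_L$ acting trivially on a right half-line and $\gamma_R$ trivially on a left half-line, and then invokes local computability (item~\ref{propitem:anomaly locally computable}) twice: $\al$ and $\gamma_L^{-1}\al\gamma_L$ agree on a right half-line, so their anomalies coincide, and $\gamma_L^{-1}\al\gamma_L$ agrees with $\beta$ on a left half-line, so again the anomalies coincide. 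Your route is more direct and self-contained; the paper's route makes item~(2) do double duty and so avoids verifying that the conjugated restriction meets the support requirements of a right restriction, though that check is straightforward.
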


The proof can be found in Appendix \ref{app:proof of anomaly proposition}.

\begin{remark} \label{rem:could use left restrictions}
        We could use left restrictions instead of right restrictions to give an alternative anomaly $\ano_L(\al)$. Then one can check that $\ano_L(\al) = \ano(\al)^{-1}$.
\end{remark}

\section{Examples} \label{sec:examples}

Let $G$ be a finite group and $\omega : G^3 \rightarrow U(1)$ a 3-cocycle. We construct a symmetry $\al$ with anomaly $[\omega] \in H^3(G, U(1))$.

Consider the spin chain with on-site algebras $\caA_x \simeq \End \big( \C^{\abs{G}} \big)$. Define unitaries $V_{j, j+1}^{(g)} \in \caA_{\{j, j+1\}}$ by
$$
V_{j, j+1}^{(g)} |g_j, g_{j+1} \rangle = \omega(g, g_{j+1}, \bar g_{j+1} g_{j}) |g_j,  g_{j+1} \rangle.
$$
Note that the $V_{j, j+1}^{(g)}$ commute with each other for all $j \in \Z$ and for all $g \in G$.

Define $\al^{(g)}$ as the composition $\al_3^{(g)} \circ \al_2^{(g)} \circ \al_1^{(g)}$ of three block partitioned QCAs. The blocks of $\al_1^{(g)}$ are neighbouring pairs of sites $\{2a, 2a+1\}$ and the corresponding gates are $V^{(g)}_{2a, 2a+1}$. Similarly, $\al_2^{(g)}$ has blocks $\{2a-1, 2a\}$ and corresponding gates $V_{2a-1, 2a}^{(g)}$. Finally, the block partitioned QCA $\al_3^{(g)}$ has the singletons $\{a\}$ as blocks with the left action $L^{(g)} | h \rangle = | gh \rangle$ as gates. See Figure \ref{fig:symmetry example}.

\begin{figure} [ht]
\centering
\begin{center}
\resizebox{.6\linewidth}{!}{
\rotatebox{-90}{
\begin{quantikz}
    & \gate{\vdots} & \gate[label style={black,rotate=90},wires = 2][45pt][0]{V_{9,10}^{(g)}} &  \gate{\vdots}  & \qw \\
    &  \gate[label style={rotate=90}][15pt][25pt]{L^{(g)}} & & \gate[label style={black,rotate=90},wires = 2][45pt][0]{V_{8,9}^{(g)}} & \qw \\
    & \gate[label style={rotate=90}][15pt][25pt]{L^{(g)}} & \gate[label style={black,rotate=90},wires = 2][45pt][0]{V_{7,8}^{(g)}} & &\qw \\
    & \gate[label style={rotate=90}][15pt][25pt]{L^{(g)}} & & \gate[label style={black,rotate=90},wires = 2][45pt][0]{V_{6,7}^{(g)}} &  \qw \\
    & \gate[label style={rotate=90}][15pt][25pt]{L^{(g)}} & \gate[label style={black,rotate=90},wires = 2][45pt][0]{V_{5,6}^{(g)}} & &\qw \\
    & \gate[label style={rotate=90}][15pt][25pt]{L^{(g)}} & & \gate[label style={black,rotate=90},wires = 2][45pt][0]{V_{4,5}^{(g)}}& \qw \\
    & \gate[label style={rotate=90}][15pt][25pt]{L^{(g)}} & \gate[label style={black,rotate=90},wires = 2][45pt][0]{V_{3,4}^{(g)}} & &\qw \\
    & \gate[label style={rotate=90}][15pt][25pt]{L^{(g)}} & & \gate[label style={black,rotate=90},wires = 2][45pt][0]{V_{2,3}^{(g)}}&\qw  \\
    & \gate[label style={rotate=90}][15pt][25pt]{L^{(g)}} & \gate[label style={black,rotate=90},wires = 2][45pt][0]{V_{1,2}^{(g)}} & & \qw \\
    & \gate{\vdots} & & \gate{\vdots} &\qw 
\end{quantikz}}}
\end{center}

\caption{The FDQC defining $\alpha^{(g)}$.}
\label{fig:symmetry example}
\end{figure}
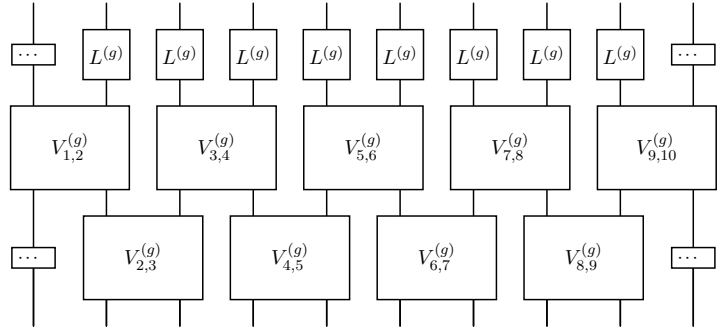


Let $I = [a, b] \subset \Z$ be a finite interval and let $\al_I^{(g)}$ be the FDQC obtained from $\al^{(g)}$ by only retaining the gates that are supported on $I$, see Figure \ref{fig:symmetry example restricted}. The product of the finite number of gates of $\al_I^{(g)}$ then defines a unitary  $U_I^{(g)}$ so that $\al_I^{(g)} = \Ad[ U_I^{(g)} ]$. Note that $\al^{(g)} = \lim_{a \uparrow \infty} \al_{[-a, a]}^{(g)}$ in the strong topology.

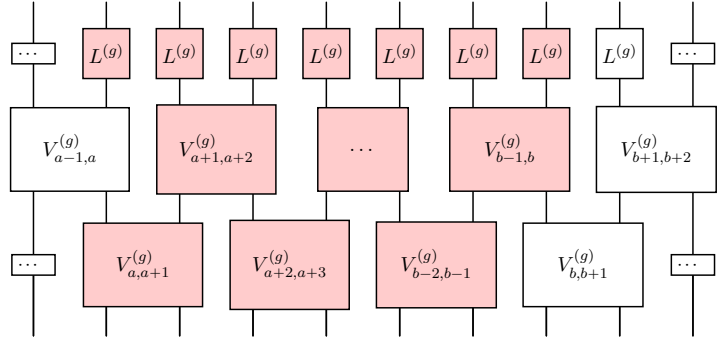
\begin{figure} [ht]
\centering
\begin{center}
\resizebox{.6\linewidth}{!}{
\rotatebox{-90}{
\begin{quantikz}
    & \gate{\vdots} & \gate[label style={black,rotate=90},wires = 2][45pt][0]{V_{b+1,b+2}^{(g)}} &  \gate{\vdots}  & \qw \\
    &  \gate[label style={rotate=90}][15pt][25pt]{L^{(g)}} & & \gate[label style={black,rotate=90},wires = 2][45pt][0]{V_{b,b+1}^{(g)}} & \qw \\
    & \gate[label style={rotate=90}, style = {fill = red!20}][15pt][25pt]{L^{(g)}} & \gate[label style={black,rotate=90},wires = 2, style = {fill = red!20}][45pt][0]{V_{b-1,b}^{(g)}} & &\qw \\
    & \gate[label style={rotate=90}, style = {fill = red!20}][15pt][25pt]{L^{(g)}} & & \gate[label style={black,rotate=90},wires = 2, style = {fill = red!20}][45pt][0]{V_{b-2,b-1}^{(g)}} &  \qw \\
    & \gate[label style={rotate=90}, style = {fill = red!20}][15pt][25pt]{L^{(g)}} & \gate[label style={black,rotate=90},wires = 2, style = {fill = red!20}][45pt][0]{\dots} & &\qw \\
    & \gate[label style={rotate=90}, style = {fill = red!20}][15pt][25pt]{L^{(g)}} & & \gate[label style={black,rotate=90},wires = 2, style = {fill = red!20}][45pt][0]{V_{a+2,a+3}^{(g)}}& \qw \\
    & \gate[label style={rotate=90}, style = {fill = red!20}][15pt][25pt]{L^{(g)}} & \gate[label style={black,rotate=90},wires = 2, style = {fill = red!20}][45pt][0]{V_{a+1,a+2}^{(g)}} & &\qw \\
    & \gate[label style={rotate=90}, style = {fill = red!20}][15pt][25pt]{L^{(g)}} & & \gate[label style={black,rotate=90},wires = 2, style = {fill = red!20}][45pt][0]{V_{a,a+1}^{(g)}}&\qw  \\
    & \gate[label style={rotate=90}, style = {fill = red!20}][15pt][25pt]{L^{(g)}} & \gate[label style={black,rotate=90},wires = 2][45pt][0]{V_{a-1,a}^{(g)}} & & \qw \\
    & \gate{\vdots} & & \gate{\vdots} &\qw 
\end{quantikz}}}
\end{center}

\caption{The FDQC (in red) defining $\alpha_I^{(g)}$ for $I = [a, b]$.}
\label{fig:symmetry example restricted}
\end{figure}


\begin{lemma} \label{lem:example fusion operators}
    If $I = [a, b]$ then
    $$
    U_I^{(g)} U_I^{(h)} U_I^{(gh)*} = \Fusion_a(g, h) \times \Fusion_b(g, h)^*
    $$
    with
    $$
    \Fusion_j(g, h) = \sum_{g_j \in G} \omega(g, h, \bar h \bar g g_j) \, | g_j \rangle \langle g_j |.
    $$
\end{lemma}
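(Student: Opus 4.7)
The plan is to compute $U_I^{(g)} U_I^{(h)} U_I^{(gh)*}$ directly on the product basis $\{|g_a, g_{a+1}, \ldots, g_b\rangle\}$ of $\caA_I$, and exhibit the resulting diagonal phase as the telescoping product predicted by the $3$-cocycle identity.

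First, I would write $U_I^{(g)} = \bigl(\prod_{k=a}^{b} L_k^{(g)}\bigr) \cdot \bigl(\prod_{j=a}^{b-1} U_{j,j+1}^{(g)}\bigr)$, noting that all the two-site gates commute among themselves (they are mutually diagonal in the chosen basis), so their order in the product is irrelevant, and they can be carried out before the left-multiplications $L_k^{(g)}$. Acting on $|g_a, \ldots, g_b\rangle$ this gives a phase factor $P(g; g_a, \ldots, g_b) := \prod_{j=a}^{b-1} \omega(g, g_{j+1}, \bar g_{j+1} g_j)$ times the state $|g g_a, \ldots, g g_b\rangle$.

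Next, I would apply $U_I^{(gh)*}$, then $U_I^{(h)}$, then $U_I^{(g)}$ to a generic basis vector $|g_a, \ldots, g_b\rangle$ and track the configurations along the way: after $U_I^{(gh)*}$ the state becomes $|\bar h \bar g g_a, \ldots, \bar h \bar g g_b\rangle$ (up to a phase), then $|\bar g g_a, \ldots, \bar g g_b\rangle$, and finally returns to $|g_a, \ldots, g_b\rangle$. The net phase acquired is a product over $j \in \{a, \ldots, b-1\}$ of the factor
\[
\frac{\omega(g, \bar g g_{j+1}, \bar g_{j+1} g_j) \, \omega(h, \bar h \bar g g_{j+1}, \bar g_{j+1} g_j)}{\omega(gh, \bar h \bar g g_{j+1}, \bar g_{j+1} g_j)}.
\]
The key step — and the only non-routine one — is to rewrite this ratio using the $3$-cocycle identity for $\omega$ evaluated on $(g, h, x, y)$ with $x := \bar h \bar g g_{j+1}$ and $y := \bar g_{j+1} g_j$. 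A short manipulation identifies the above with $\omega(g, h, xy)/\omega(g, h, x) = \omega(g, h, \bar h \bar g g_j)/\omega(g, h, \bar h \bar g g_{j+1})$.

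Finally, the product over $j = a, \ldots, b-1$ telescopes, leaving exactly $\omega(g, h, \bar h \bar g g_a)/\omega(g, h, \bar h \bar g g_b)$. Hence $U_I^{(g)} U_I^{(h)} U_I^{(gh)*}$ is diagonal in the product basis and factorizes as a phase depending only on $g_a$ times the conjugate of a phase depending only on $g_b$, which reassembles precisely into $\Fusion_a(g,h) \cdot \Fusion_b(g,h)^*$ with $\Fusion_j(g, h) = \sum_{g_j} \omega(g, h, \bar h \bar g g_j)\, |g_j\rangle\langle g_j|$. The main obstacle is the algebraic cocycle manipulation used to bring each local factor into telescoping form; everything else is bookkeeping of the circuit's action on the $|G|^{b-a+1}$-dimensional basis.
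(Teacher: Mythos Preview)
Your argument is correct and is essentially the same as the paper's: both act on product basis vectors, reduce the phase to a product over neighbouring pairs of the ratio $\omega(g,\cdot,\cdot)\omega(h,\cdot,\cdot)/\omega(gh,\cdot,\cdot)$, apply the 3-cocycle identity to put each factor in telescoping form, and collapse the product to boundary terms. The only cosmetic differences are the choice of basis label (you evaluate on $|g_a,\dots,g_b\rangle$ while the paper evaluates on $|ghg_a,\dots,ghg_b\rangle$) and the index range ($j=a,\dots,b-1$ versus $j=a+1,\dots,b$).
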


\begin{proof}
    Let us act on a product state $|g_a \, g_{a+1} \cdots g_b \rangle$. Let $f_j = \bar g_{j} g_{j-1}$ for all $j \in \{a+1, \cdots, b\}$. Then
    $$
    U_I^{(h)} |g_a \, g_{a+1} \cdots g_b \rangle = \prod_{j = a+1}^b \omega(h, g_j, f_j) \times |(h g_a) \, (h g_{a+1}) \cdots (h g_b) \rangle
    $$
    so
    $$
    U_I^{(g)} U_I^{(h)} |g_a \, \cdots g_b \rangle = \prod_{j = a+1}^b \omega(h, g_j, f_j) \omega(g, h g_j, f_j) \times |(gh g_a) \, \cdots (gh g_b) \rangle
    $$
    and
    \begin{align*}
        U_I^{(g)} U_I^{(h)} U_I^{(gh)*} | (gh g_a) \cdots (gh g_b) \rangle &= \prod_{j = a+1}^b \frac{\omega(h, g_j, f_j) \omega(g, h g_j, f_j)}{\omega(gh, g_j, f_j)} \times |(gh g_a) \, \cdots (gh g_b) \rangle \\
        &= \prod_{j = a+1}^b \frac{\omega(g, h, g_{j-1})}{\omega(g, h, g_j)} \times |(gh g_a) \, \cdots (gh g_b) \rangle \\
        &= \frac{\omega(g, h, g_a)}{\omega(g, h, g_b )} \times |(gh g_a) \, \cdots (gh g_b) \rangle \\
        &= \Fusion_a(g, h) \Fusion_b(g, h)^* \times |(gh g_a) \, \cdots (gh g_b) \rangle
    \end{align*}
    where we recalled that $f_j = \bar g_j g_{j-1}$ and we used the cocycle relation (Proposition \ref{prop:anomaly}) in the second step.
\end{proof}

It follows immediately from this lemma that $\al^{(g)} \al^{(h)} = \al^{(gh)}$ for all $g, h \in G$, so that $g \mapsto \al^{(g)}$ is indeed a group homomorphism. Let $\al_{\geq}^{(g)} := \lim_{b \uparrow \infty} \al_{[0, b]}^{(g)}$. Then $\al_{\geq}$ is a right restriction of $\al$ and
$$
\al_{\geq}^{(g)} \al_{\geq}^{(h)} = \Ad[ \Fusion_0(g, h) ] \al_{\geq}^{(gh)}.
$$

We can now compute the anomaly of $\al$. Note that the fusion operators $\Fusion_j(g, h)$ commute with all the gates $U_{j, j+1}(f)$, indeed, all these operators are diagonal in the `group basis'. We therefore find
\begin{align*}
    \al_{\geq}^{(f)} \big( \Fusion_0(g, h) \big) = \sum_{g_0 \in G} \, \omega(g, h, \bar h \bar g g_0)  \, | f g_0 \rangle \langle f g_0 | = \sum_{g_0 \in G} \, \omega(g, h, \bar h \bar g \bar f g_0) \, | g_0 \rangle \langle g_0 |
\end{align*}
hence
$$
 \al_{\geq}^{(f)} \big( \Fusion_0(g, h) \big) \Fusion_0(f, gh) = \sum_{g_0 \in G} \omega(g, h, \bar h \bar g \bar f g_0) \omega(f, gh, \, \bar h \bar g \bar f g_0) \, | g_0 \rangle \langle g_0 |
$$
Comparing this to
$$
\Fusion_0(f, g) \Fusion_0(fg, h) = \sum_{g_0 \in G} \omega(f, g, \bar g \bar f g_0) \omega(fg, h, \bar h \bar g \bar f g_0) |g_0 \rangle \langle g_0 | 
$$
and using the cocycle relation yields
$$
\Fusion_0(f, g) \Fusion_0(fg, h) = \omega(f, g, h) \times \al_{\geq}^{(f)} \big( \Fusion_0(g, h) \big) \Fusion_0(f, gh),
$$
showing that $\al$ indeed has anomaly $[\omega]$.
\section{Right restrictions and covariance} \label{sec:right restrictions}

\subsection{Right restrictions that are group homomorphisms}

The following Lemma says that any symmetry with trivial anomaly admits right restrictions that are group homomorphisms, at least after stacking with a local degree of freedom. The proof is closely analogous to~\cite[Theorem 3.1.6]{sutherland1980cohomology}. \par 

A \emph{local degree of freedom} at site $j$ is a spin chain where the on-site algebras $\caA_i \simeq \C$ are trivial for $i\neq j$. For convenience, also the only non-trivial on-site algebra $\caA_j$ is sometimes also called the local degree of freedom. The procedure of stacking a local degree of freedom $\caA_j'$ with a spin chain $\caA$ results in a spin chain that is isomorphic to $\caA \otimes \caA'_j$. A simultaneous stacking of a countable number of (uniformly upper bounded) local degrees of freedom at different sites with a spin chain $\caA$ still produces a spin chain with uniformly upper bounded on-site dimensions.

\begin{lemma} \label{lem:group hom restrictions}
Let $\al$ be a symmetry of range $R$ with trivial anomaly. 
For any site $j \in \Z$ we can stack the on-site algebra $\caA_j$ by a local degree of freedom $\End(\C^{\abs{G}})$, obtaining an enlarged spin chain $\widetilde \caA$. Then the composed symmetry $\tilde \al = \al \otimes \id$ on $\widetilde \caA$ admits a right  restriction $\tilde \beta_{\geq j}$ at site $j$ of defect size $5R$ such that $g \mapsto \tilde \beta_{\geq j}^{(g)}$ is a group homomorphism.
\end{lemma}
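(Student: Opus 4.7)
The plan is to use the triviality of the 3-cocycle $\omega_j$ to rescale the fusion operators into a strictly associative family, and then to exploit the new gauge degree of freedom at site $j$ as a local copy of the left regular representation of $G$, which absorbs the remaining operator-valued obstruction.

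Concretely, let $\al_{\geq j}$ be a right restriction of $\al$ of defect size $2R$ (guaranteed by Lemma \ref{lem:LPSs are FDQCs}), with fusion operators $\Fusion_j(g,h) \in \caA_{[j-2R,\, j+3R]}$ and 3-cocycle $\omega_j$. Since $\ano(\al) = [1]$, pick a normalized 2-cochain $\nu : G^2 \to U(1)$ with $d\nu = \omega_j$; the rescaled operators $\widetilde{\Fusion}_j(g,h) := \nu(g,h)\, \Fusion_j(g,h)$ then satisfy the strict associativity relation
$$\widetilde{\Fusion}_j(f, g)\, \widetilde{\Fusion}_j(fg, h) \;=\; \al_{\geq j}^{(f)}\bigl(\widetilde{\Fusion}_j(g, h)\bigr)\, \widetilde{\Fusion}_j(f, gh).$$
Label the basis of the gauge degree of freedom $\C^{\abs{G}}$ at site $j$ by group elements $|k\rangle$, $k \in G$, and on the enlarged chain $\widetilde{\caA}$ define
$$V^{(g)} := \sum_{k \in G} \widetilde{\Fusion}_j(g, k)^{-1} \otimes |gk\rangle\langle k|_{\mathrm{gauge}}, \qquad \tilde{\beta}_{\geq j}^{(g)} := \Ad\!\left[V^{(g)}\right] \circ \bigl(\al_{\geq j}^{(g)} \otimes \id_{\mathrm{gauge}}\bigr).$$
So $V^{(g)}$ left-translates the gauge register by $g$, and conditioned on the gauge state $|k\rangle$ applies the operator $\widetilde{\Fusion}_j(g,k)^{-1}$ on the chain.

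The homomorphism property is then a direct computation: expanding $V^{(g)} \cdot (\al_{\geq j}^{(g)} \otimes \id)(V^{(h)}) \cdot \Fusion_j(g,h)$, using $\Fusion_j = \nu^{-1}\widetilde{\Fusion}_j$, together with the associativity of $\widetilde{\Fusion}_j$, the operator factors telescope to yield exactly $\nu(g,h)^{-1}\, V^{(gh)}$. Since the overall scalar drops out under $\Ad$, this gives $\tilde{\beta}_{\geq j}^{(g)} \circ \tilde{\beta}_{\geq j}^{(h)} = \tilde{\beta}_{\geq j}^{(gh)}$. The right-restriction property with defect size $5R$ is routine bookkeeping: $V^{(g)}$ is supported on $[j-2R,\, j+3R]$ together with the gauge site $j$, while $\al_{\geq j}^{(g)}$ has defect size $2R$, so the composition agrees with $\tilde{\al}^{(g)}$ on $\widetilde{\caA}_{\geq j+5R}$ and with the identity on $\widetilde{\caA}_{< j-5R}$.

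The main obstacle is recognizing the correct ansatz for $V^{(g)}$. One cannot simply cancel the operator-valued 2-cocycle $\Fusion_j$ by scalar phases, because $\Fusion_j$ is non-abelian; and a purely local phase ancilla would not interlock with $\al_{\geq j}^{(g)}$ correctly. The key observation is that the gauge register must carry the left regular representation of $G$, so that the sum over $k$ in $V^{(g)}$ realizes — at the operator level — the crossed-product construction which unprojectivizes a projective representation once its obstruction is cohomologically trivial, in direct analogy with \cite[Theorem 3.1.6]{sutherland1980cohomology}. The intertwining of $\al_{\geq j}^{(g)}$ acting on $\widetilde{\Fusion}_j(h,l)$ with the left translation $k \mapsto gk$ on the gauge is precisely what makes the telescoping collapse the product $V^{(g)}(\al_{\geq j}^{(g)} \otimes \id)(V^{(h)})\Fusion_j(g,h)$ to a scalar multiple of $V^{(gh)}$.
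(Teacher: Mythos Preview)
Your proposal is correct and follows essentially the same approach as the paper. The only cosmetic differences are that the paper absorbs the trivializing 2-cochain directly into the phase choice of the fusion operators (rather than introducing $\nu$ explicitly), and defines $V(g) = \sum_k \Fusion(g,k)\otimes|k\rangle\langle gk|$ with $\tilde\beta_{\geq}^{(g)} = \Ad[V(g)^*]\circ\tilde\al_{\geq}^{(g)}$, which is exactly your $V^{(g)} = V(g)^*$ and the same $\tilde\beta_{\geq}^{(g)}$; the telescoping identity you describe is the paper's verification that $\tilde\al_{\geq}^{(g)}(V(h))V(g)V(gh)^* = \Fusion(g,h)$.
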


\begin{proof}
    Let $\al_{\geq j}$ be a right restriction of $\alpha$ at $j \in \Z$ with defect size $2R$. We drop $j$ from the notation in the remainder of this proof. Let $\Fusion(g, h) \in \caA_{[j-2R, j+3R]}$ be the fusion operators associated to this right restriction. Since $\al$ has trivial anomaly, by Eq. \eqref{eq-fusion.operators} we can choose the phases of the fusion operators such that
    \begin{equation} \label{eq:trivial anomaly relation}
        \Fusion(f, g) \Fusion(fg, h) = \al_{\geq}^{(f)} \big( \Fusion(g, h) \big) \Fusion(f, gh)
    \end{equation}
    for all $f, g, h \in G$. Let $\tilde \al_{\geq} = \al_{\geq} \otimes \id$, which is a right restriction for $\tilde \al = \al \otimes \id$ of defect size $2R$. Define unitaries
    \begin{equation*}\label{eq:V_to_representation}
    V(g) := \sum_k \Fusion(g, k) \otimes |k\rangle\langle gk| \in \widetilde \caA_{[j-2R, j + 3R]}, 
    \end{equation*}
    then using Eq. \eqref{eq:trivial anomaly relation} we obtain
    \begin{equation*}
        \tilde \al_{\geq}^{(g)}(V(h))V(g)V(gh)^* = \Fusion(g, h).
    \end{equation*}
    Define a new right restriction $\tilde \beta_{\geq}$ of $\tilde \al$ with components $\tilde \beta_{\geq}^{(g)} = \mathrm{Ad}[V(g)^*] \circ \tilde \al_{\geq}^{(g)}$. Then
    \begin{align*}
    \tilde \beta_{\geq}^{(g)} \circ \tilde \beta_{\geq}^{(h)} &= \mathrm{Ad}[V(g)^*] \circ \tilde \al_{\geq}^{(g)} \circ \mathrm{Ad}[V(h)^*] \circ \tilde \al_{\geq}^{(h)} \\
    &= \mathrm{Ad}[V(g)^* \, \tilde \al_{\geq}^{(g)}(V(h)^*) \, \Fusion(g, h)] \circ \tilde \al_{\geq}^{(gh)} \\
    &= \mathrm{Ad}[V(gh)^*] \circ \tilde \al_{\geq}^{(gh)} = \tilde \beta_{\geq}^{(gh)}.
    \end{align*}
    \ie $g \mapsto \tilde \beta_{\geq}^{(g)}$ is a group homomorphism. Finally noting that $\tilde \beta_{\geq}$ is a right restriction of $\tilde \al$ at $j$ with defect size $5R$ yields the claim.
\end{proof}

\subsection{Covariant right restrictions}

Let $\al$ be a symmetry of range $R$ on a spin chain $\caA$, and let $\al_{\geq j}$ be a right restriction of $\al$ at some site $j$. We say $\al_{\geq j}$ is \emph{covariant} if
$$
\al^{(k)} \circ \al_{\geq j}^{(\bar k g k)} \circ \al^{(\bar k)} = \al_{\geq j}^{(g)}
$$
for all $g, k \in G$. The failure of the right restriction to be covariant is captured by local unitaries $\Crossing_g(k)$ which are uniquely defined up to phase by
$$
\al^{(k)} \circ \al_{\geq j}^{(\bar k g k)} \circ \al^{(\bar k)} \circ \big(  \al_{\geq j}^{(g)} \big)^{-1} = \Ad[ \Crossing_g(k) ].
$$
We call these the \emph{crossing operators} associated to the right restriction $\al_{\geq j}$. If $\al_{\geq j}$ has defect size $L$ then $\Crossing_g(k)$ is supported on the interval $[j - (L + 2R), j + (L+2R)]$.

By straightforward computation we find
$$
\Ad \left[ \al^{(k)}\big( \Crossing_{\bar k g k}(l) \big) \, \Crossing_{g}(k) \right] = \Ad \left[ \Crossing_{g}(kl) \right]
$$
so there are phases $\lambda_g(k, l) \in U(1)$ such that
$$
\al^{(k)}\big( \Crossing_{\bar k g k}(l) \big) \, \Crossing_{g}(k) = \lambda_g(k, l) \times \Crossing_g(kl)
$$
for all $g, k, l \in G$.

\begin{proposition} \label{prop:obstruction to covariant right restriction}
    The phases $\lambda_g(k, l)$ satisfy the twisted 2-cocycle equations
    $$
        1 = \frac{\lambda_g(k, lm) \lambda_{\bar k g k}(l, m)}{\lambda_g(k, l) \lambda_g(kl, m)}
    $$
    for all $g, k, l, m \in G$. They therefore define a twisted cohomology class $[\lambda] \in H^2(G, U(1)[G])$. (See Appendix \ref{app:cohomology} for the relevant definitions.)

    The class $[\lambda]$ depends only on the symmetry $\al$, \ie it does not depend on the choice of right restriction or the site $j$. So we obtain a well defined map $$\Lambda : \Sym_G \rightarrow H^2(G, U(1)[G])$$ which we call the obstruction to covariant right restrictions.
    
    For symmetries $\al$ and $\beta$ of range $R$ this obstruction satisfies
    \begin{enumerate}
        \item \label{propitem:obstruction decoupled implies trivial} If $\al$ is decoupled then $\Lambda(\al) = [1]$, the identity element of $H^2(G, U(1)[G])$.
        \item \label{propitem:obstruction locally computable}$\Lambda$ is locally computable: If $\al$ and $\beta$ act on the same spin chain $\caA$ and there is an interval $I$ of length $12R+1$ such that $\al|_{\caA_I} = \beta|_{\caA_I}$ then $\Lambda(\al) = \Lambda(\beta)$.
        \item \label{propitem:obstruction multiplicative} $\Lambda$ is multiplicative under stacking: $\Lambda(\al \otimes \beta) = \Lambda(\al) \cdot \Lambda(\beta)$.
        \item \label{propitem:obstruction constant on stable equivalence classes} $\Lambda$ is constant on stable equivalence classes: $\al \sim \beta \, \implies \, \Lambda(\al) = \Lambda(\beta)$.
    \end{enumerate}
    In particular, the obstruction reduces to a homomorphism of monoids $\Lambda : (\Sym_G / \sim) \rightarrow H^2(G, U(1)[G])$.
\end{proposition}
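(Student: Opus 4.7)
The plan is to mirror the structure of the proof of Proposition \ref{prop:anomaly}, now with the crossing operators $\Crossing_g(k)$ playing the role that the fusion operators $\Fusion_j(g,h)$ played there, and with associativity of the $G$-action (rather than associativity of composition of three symmetry elements) supplying the cocycle condition. First I would derive the twisted 2-cocycle equation. Evaluating $\al^{(k)}\al^{(l)}\bigl(\Crossing_{\overline{kl}\,g\,kl}(m)\bigr)\cdot \Crossing_g(kl)$ in two ways --- once by applying the defining relation of $\lambda$ directly at the pair $(kl,m)$, once by peeling off $\al^{(l)}$ using the $(\bar k g k;\, l, m)$ relation, then using the outer $(k, l)$ relation to absorb the resulting $\al^{(k)}(\Crossing_{\bar k g k}(l))$ into $\Crossing_g(kl)$ --- expresses the same operator as two phase multiples of $\Crossing_g(klm)$. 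Equating the two phases gives exactly the stated relation $\lambda_g(k,l)\lambda_g(kl,m) = \lambda_{\bar k g k}(l,m)\lambda_g(k,lm)$.

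Next I would verify that the class $[\lambda]\in H^2(G, U(1)[G])$ is independent of all choices. Rescaling each $\Crossing_g(k)$ by a phase $\mu_g(k)$ shifts $\lambda_g(k,l)$ by $\mu_g(kl)^{-1}\mu_{\bar k g k}(l)\mu_g(k)$, which is precisely the twisted coboundary recalled in Appendix \ref{app:cohomology}. Two right restrictions of $\al$ at the same site agree on $\caA_{\geq j+L}$ and on $\caA_{< j-L}$, so they differ by a unitary in a bounded window; the induced crossings then differ by local unitaries whose phases can be fixed to give a coboundary shift. Independence of the site $j$ follows because a right restriction at $j+1$ can be obtained from one at $j$ by composition with a QCA supported on a finite interval, again contributing only a coboundary. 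For property~(\ref{propitem:obstruction decoupled implies trivial}) I align the defect of the right restriction with a block boundary of the decoupling, so that $\al_{\geq j}$ is literally a group homomorphism to the right of $j$ and identity to the left; then all crossings can be taken to be $1$, so $\Lambda(\al)=[1]$. Property~(\ref{propitem:obstruction locally computable}) uses that for a suitably chosen right restriction the crossings are supported in a bounded window about $j$; choosing the defect within the interior of the agreement interval $I$ ensures that the fusion and crossing relations can be computed identically for $\al$ and $\beta$, and the stated length $12R+1$ is exactly what is needed. Property~(\ref{propitem:obstruction multiplicative}) is immediate from the tensor factorization of right restrictions and crossings under stacking.

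The main obstacle is property~(\ref{propitem:obstruction constant on stable equivalence classes}). By the preceding three properties, it suffices to prove invariance under conjugation by a FDQC: if $\al'^{(g)} = \gamma^{-1}\al^{(g)}\gamma$ for a FDQC $\gamma$, then $\Lambda(\al')=\Lambda(\al)$. I would construct right restrictions for $\al'$ of the form $\gamma_R^{-1}\circ\al_{\geq j}^{(g)}\circ\gamma_R$, where $\gamma_R$ is a truncation of $\gamma$ to a right half-line that agrees with $\gamma$ on a sufficiently large window around $j$. The corresponding crossings for $\al'$ are then $\gamma_R^{-1}(\Crossing_g(k))$ multiplied by local correction unitaries that measure the failure of $\gamma$ to commute with $\al^{(k)}$ near the boundary of $\gamma_R$. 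The delicate bookkeeping, which I expect to be the hardest step and the bulk of Appendix \ref{app:proof of proposition covariant obstruction}, is that because $\gamma$ is not $G$-covariant, these correction unitaries are genuinely $k$-dependent; one must check that, after phases are fixed, they assemble into exactly a twisted coboundary of the form $\mu_g(k)$, so that $\lambda$ is modified only by a coboundary and the cohomology class $[\lambda]$ is preserved.
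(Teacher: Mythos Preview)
Your overall strategy is sound and matches the paper closely for the twisted 2-cocycle equation, the independence of choices, and items~\ref{propitem:obstruction decoupled implies trivial}--\ref{propitem:obstruction multiplicative}. The paper's Appendix~\ref{app:proof of proposition covariant obstruction} proves independence of the right restriction by writing $\tilde\al_{\geq}^{(g)} = \Ad[\Gamma_g]\circ\al_{\geq}^{(g)}$ and computing that the new crossing operators satisfy $\tilde\Crossing_g(h) = \varepsilon_g(h)\,\al^{(h)}(\Gamma_{\bar h g h})\,\Crossing_g(h)\,\Gamma_g^*$, from which one checks directly that $\tilde\lambda$ and $\lambda$ differ by a twisted coboundary; your sketch is a less explicit version of exactly this computation.

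The one place you diverge is item~\ref{propitem:obstruction constant on stable equivalence classes}, and here the paper's route is both different and considerably simpler than what you propose. You plan to track the crossing operators directly under conjugation by a FDQC $\gamma$, anticipating delicate $k$-dependent correction unitaries that must be shown to assemble into a twisted coboundary. The paper instead bootstraps entirely from local computability (item~\ref{propitem:obstruction locally computable}): decompose $\gamma = \gamma_L\circ\gamma_R$ with $\gamma_L$ acting as identity to the right of some site and $\gamma_R$ acting as identity to the left. Then $\al$ and $\gamma_L^{-1}\circ\al\circ\gamma_L$ agree on a right half-line, so by item~\ref{propitem:obstruction locally computable} they have the same $\Lambda$; likewise $\gamma_L^{-1}\circ\al\circ\gamma_L$ and $\beta = \gamma^{-1}\circ\al\circ\gamma$ agree on a left half-line, so again the same $\Lambda$. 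Together with items~\ref{propitem:obstruction decoupled implies trivial} and~\ref{propitem:obstruction multiplicative} this yields constancy on stable equivalence classes with no bookkeeping whatsoever. Your direct approach is not obviously wrong, but it is unnecessary: the paper explicitly remarks that the proofs of items~\ref{propitem:obstruction decoupled implies trivial}--\ref{propitem:obstruction constant on stable equivalence classes} are ``virtually identical'' to those for the anomaly in Appendix~\ref{app:proof of anomaly proposition}, and item~\ref{propitem:anomaly constant on stable equivalence classes} there is precisely this decomposition trick.
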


The proof can be found in Appendix \ref{app:proof of proposition covariant obstruction}.

\begin{remark} \label{rem:remarks on obstruction to covariant right restrictions}
    \begin{enumerate}
        \item It will follow from Theorem \ref{thm:classification} that $(\Sym_G/\sim)$ is a group and so $\Lambda : (\Sym_G / \sim) \rightarrow H^2(G, U(1)[G])$ is in fact a group homomorphism.
    
        \item We could use left restrictions instead of right restrictions to define an obstruction to covariant left restrictions $\Lambda_L(\al)$. Then $\Lambda_L(\al) = \Lambda(\al)^{-1}$.

        \item We will show in Appendix \ref{app:general_covariance} that the obstruction $\Lambda(\al)$ is given by the inverse of the slant product of the anomaly of $\al$ (see Appendix \ref{app:cohomology} for definitions). In the context of SPTs the obstruction $\Lambda$ is therefore intimately related to the types and properties of the anyons supported by the gauged SPT \cite{wang2015topological}, which are believed to be described by the Dijkgraaf-Witten TQFT \cite{dijkgraaf1990topological} corresponding to the anomaly.
    \end{enumerate}
\end{remark}

As the name suggests, if a symmetry $\al$ has vanishing obstruction to covariant right restrictions, then (after stacking with local degrees of freedom) $\al$ indeed admits covariant right restrictions:

\begin{lemma} \label{lem:covariant right restrictions}
    Let $\al$ be a symmetry of range $R$ such that $\Lambda(\al) = [1]$. For any site $j \in \Z$ we can enlarge the on-site algebra $\caA_j$ by stacking with a local degree of freedom $\End(\C^{\abs{G}})$, obtaining an enlarged spin chain $\widetilde \caA$. Then the symmetry $\tilde \al = \al \otimes \rho_{\reg}$, where $\rho_{\reg}$ is the left regular representation of $G$ on the local degree of freedom, admits a covariant right restriction at $j$ with defect size $5R$.
\end{lemma}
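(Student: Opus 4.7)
The plan is to mimic the construction in Lemma \ref{lem:group hom restrictions}, replacing the fusion operators and the 3-cocycle by the crossing operators and the twisted 2-cocycle. The idea is that modifying a given right restriction of $\tilde\alpha$ by conjugation with a carefully chosen unitary $W(g)$ at each $g$ precisely cancels the crossing operators (up to phases), and the phases can be made to cancel if and only if $\lambda$ is a coboundary, which is exactly the hypothesis $\Lambda(\al)=[1]$.

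Concretely, I would first pick a right restriction $\al_{\geq j}$ of $\al$ of defect size $2R$, with crossing operators $\Crossing_g(k) \in \caA_{[j-4R, j+4R]}$ and associated twisted 2-cocycle $\lambda_g(k,l)$ as in Section~\ref{sec:right restrictions}. Since $\Lambda(\al)=[1]$, there exist phases $\mu_g(k) \in U(1)$ with $\lambda_g(k,l) = \mu_{\bar kgk}(l)\,\mu_g(k)/\mu_g(kl)$. Enlarge the on-site algebra at $j$ by a gauge factor $\End(\C^{|G|})$ with basis $\{|k\rangle\}_{k\in G}$, and let $\tilde\al = \al \otimes \rho_{\reg}$. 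Note that $\tilde\al_{\geq j} := \al_{\geq j} \otimes \rho_{\reg}$ is a right restriction of $\tilde\al$ of defect size $2R$ whose crossing operators coincide with $\Crossing_g(k)$ (embedded in $\widetilde\caA$), because the gauge contributions to $\tilde\al^{(k)} \circ \tilde\al_{\geq j}^{(\bar kgk)} \circ \tilde\al^{(\bar k)} \circ (\tilde\al_{\geq j}^{(g)})^{-1}$ telescope to the identity.

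Now define
$$W(g) \;:=\; \sum_{k \in G} \mu_g(k)^{-1}\, \Crossing_g(k) \otimes |k\rangle\langle k| \;\in\; \widetilde\caA_{[j-4R,\,j+4R]},$$
which is unitary (sum over an orthogonal family of projections of unitaries), and set $\tilde\beta_{\geq j}^{(g)} := \Ad[W(g)] \circ \tilde\al_{\geq j}^{(g)}$. The heart of the proof is to verify the identity
$$\tilde\al^{(k)}\bigl(W(\bar kgk)\bigr)\cdot\bigl(\Crossing_g(k) \otimes 1\bigr) \;=\; \mu_g(k)\, W(g),$$
which proceeds by applying $\tilde\al^{(k)}$ to each summand (which conjugates $|l\rangle\langle l|$ to $|kl\rangle\langle kl|$ and acts as $\al^{(k)}$ on the $\caA$-factor), using the cocycle relation $\al^{(k)}(\Crossing_{\bar kgk}(l))\,\Crossing_g(k) = \lambda_g(k,l)\, \Crossing_g(kl)$, changing variables $m = kl$, and finally substituting the coboundary relation for $\lambda$.

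Granting this identity, covariance is immediate:
$$\tilde\al^{(k)} \circ \tilde\beta_{\geq j}^{(\bar kgk)} \circ \tilde\al^{(\bar k)} = \Ad[\tilde\al^{(k)}(W(\bar kgk))] \circ \Ad[\Crossing_g(k)\otimes 1] \circ \tilde\al_{\geq j}^{(g)} = \Ad[\mu_g(k) W(g)] \circ \tilde\al_{\geq j}^{(g)} = \tilde\beta_{\geq j}^{(g)},$$
the phase being absorbed by $\Ad$. Finally, $\tilde\beta_{\geq j}$ is a right restriction of defect size $5R$: since $W(g)$ is supported in $[j-4R, j+4R]$ and $\tilde\al$ has range $R$, the image under $\tilde\al^{(g)}$ of anything supported in $\caA_{\geq j + 5R}$ lies to the right of $\supp W(g)$, and symmetrically on the left, so $\Ad[W(g)]$ acts trivially there. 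The main obstacle is the algebraic verification of the key identity above; everything else is bookkeeping about supports and telescoping the gauge action.
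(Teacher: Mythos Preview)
Your proof is correct and follows essentially the same approach as the paper's: both modify a right restriction by $\Ad$ of a unitary $\sum_k \Crossing_g(k)\otimes|k\rangle\langle k|$ built from the crossing operators, and verify covariance via the defining relation for $\lambda$. The only cosmetic differences are that the paper absorbs the coboundary phases directly into the $\Crossing_g(k)$ (so that $\lambda\equiv 1$) rather than carrying $\mu_g(k)^{-1}$ in the definition of $W(g)$, and the paper takes the starting right restriction to be $\al_{\geq}\otimes\id$ rather than your $\al_{\geq j}\otimes\rho_{\reg}$; both choices yield the same crossing operators on $\widetilde\caA$ since the gauge contribution telescopes either way.
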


\begin{proof}
    Let $\al_{\geq}$ be a right restriction of $\alpha$ at $j \in \Z$ with defect size $2R$. Let $\Crossing_g(k) \in \caA_{[j-4R, j+4R]}$ be the crossing operators associated to this right restriction. Since $\Lambda(\al) = [1]$ we can choose the phases of the $\Crossing_g(k)$ so that
    \begin{equation} \label{eq:trivial obstruction relation}
    \al^{(k)}\big( \Crossing_{\bar k g k}(l) \big) \, \Crossing_{g}(k) = \Crossing_g(kl)
    \end{equation}
    for all $g, k, l \in G$.

    Let $\tilde \al_{\geq} = (\al_{\geq} \otimes \id)$, which is a right restriction of $\tilde \al = \al \otimes \rho_{\reg}$ of defect size $2R$. Define unitaries
    $$
    V_g = \sum_k \,  \Crossing_g(k) \otimes | k \rangle \langle k| \in \widetilde \caA_{[j-4R, j+4R]}.
    $$
    Then
    \begin{align*}
        V_g^* \, \tilde \al^{(k)} \big( V_{\bar k g k } \big) &= \left( \sum_{l_1} \,  \Crossing_g(l_1)^* \otimes | l_1 \rangle \langle  l_1 | \right) \times \left( \sum_{l_2} \,  \tilde \al^{(k)} \big(  \Crossing_{\bar k g k}(l_2) \big) \otimes | k l_2 \rangle \langle k l_2 | \right) \\
        \intertext{putting $l = l_1 = k l_2$ and using Eq. \eqref{eq:trivial obstruction relation} this becomes}
        &= \sum_l \, \Crossing_g(l)^* \, \tilde \al^{(k)} \big( \Crossing_{\bar k g k}(\bar k l ) \big) \otimes | l \rangle \langle l | = \sum_{l} \, \Crossing_g(k)^* \otimes | l \rangle \langle l | = \Crossing_g(k)^*.
    \end{align*}
    Now consider the right restriction $\tilde \beta_{\geq}$ of $\tilde\alpha$ with components $\tilde \beta_{\geq}^{(g)} = \Ad[V_g] \circ \tilde \al_{\geq}^{(g)}$. Then
    \begin{align*}
        \tilde \al^{(k)} \circ \tilde \beta_{\geq}^{(\bar k g k)} \circ \tilde \al^{(\bar k)} \circ \big( \tilde \beta_{\geq}^{(g)} \big)^{-1} &= \tilde \al^{(k)} \circ \Ad[V_{\bar k g k}] \circ \tilde \al_{\geq}^{(\bar k g k)} \circ \tilde \al^{(\bar k)} \circ \big( \tilde \al_{\geq}^{(g)} \big)^{-1} \circ \Ad[ V_g^* ] \\
        &= \Ad \left[ \tilde \al^{(k)}\big( V_{\bar k g k} \big) \Crossing_g(k) \, V_g^* \right] = \id,
    \end{align*}
    so $\tilde \beta_{\geq}$ is covariant. Noting that $\tilde \beta_{\geq}$ is a right restriction of $\tilde \al$ at $j$ with defect size $5R$ finishes the proof.
\end{proof}

We now show that symmetries with trivial anomaly have no obstruction to covariant right restrictions.
\begin{lemma} \label{lem:trivial anomaly implies trivial obstruction}
    If $\al$ has trivial anomaly then $\Lambda(\al) = [1]$.
\end{lemma}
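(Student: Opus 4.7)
The plan is to invoke Lemma \ref{lem:group hom restrictions} to upgrade the data: after replacing $\al$ by $\tilde \al = \al \otimes \id$, which leaves the obstruction $\Lambda$ unchanged (since the trivial $G$-action on an ancilla is decoupled and $\Lambda$ is both multiplicative under stacking and trivial on decoupled symmetries, by items \ref{propitem:obstruction decoupled implies trivial} and \ref{propitem:obstruction multiplicative} of Proposition \ref{prop:obstruction to covariant right restriction}), we obtain a right restriction $\tilde \beta_{\geq j}$ of $\tilde \al$ that is an honest group homomorphism: $\tilde \beta_{\geq}^{(g)} \tilde \beta_{\geq}^{(h)} = \tilde \beta_{\geq}^{(gh)}$.

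Next I would compute $\Lambda(\tilde \al)$ using this particular right restriction. Define $\Psi_k(g) := \tilde \al^{(k)} \tilde \beta_{\geq}^{(\bar k g k)} \tilde \al^{(\bar k)}$ and observe that, because $\tilde \beta_{\geq}$ is a group homomorphism and $g \mapsto \bar k g k$ is an automorphism of $G$, the map $g \mapsto \Psi_k(g)$ is again a group homomorphism. Thus both $\Psi_k$ and $\tilde \beta_{\geq}$ are group-homomorphism right restrictions of $\tilde \al$ with matching asymptotics, and their ratio is pointwise inner by the very definition $\Psi_k(g) = \Ad[\Crossing_g(k)] \circ \tilde \beta_{\geq}^{(g)}$. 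The group-hom property of $\Psi_k$ then forces an additional identity on the crossings,
\begin{equation*}
\Crossing_g(k) \, \tilde \beta_{\geq}^{(g)}\!\big(\Crossing_h(k)\big) = \phi(g, h; k) \, \Crossing_{gh}(k),
\end{equation*}
supplementing the twisted 2-cocycle relation defining $\lambda_g(k,l)$.

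The hard step is now to exploit the compatibility of these two relations together with the residual phase freedom in the $\Crossing_g(k)$ to show that $\lambda$ is a coboundary in $H^2(G, U(1)[G])$. Concretely, I expect to show that one can normalize the phases of $\Crossing_g(k)$ (using the one-parameter freedom $\Crossing_g(k) \mapsto \chi_g(k)\,\Crossing_g(k)$) so that the additional relation holds with $\phi \equiv 1$, and then derive from that together with the associativity of composition in $\Aut(\widetilde{\caA})$ that $\lambda \equiv 1$ as well. This final bookkeeping is essentially a special case of the slant-product identification $\Lambda(\al) = \mathrm{slant}(\ano(\al))^{-1}$ proved in Appendix \ref{app:general_covariance}: trivial anomaly gives a trivial slant and hence trivial obstruction. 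The main obstacle is keeping track of all the phase ambiguities simultaneously and verifying that the cocycle identities for $\lambda$ and $\phi$ interlock in the right way; this is elementary but notationally heavy, and mirrors classical arguments for splitting extensions of groups with trivial obstruction class.
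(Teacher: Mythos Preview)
Your opening move is the same as the paper's: pass to $\tilde\al=\al\otimes\id$ (which does not change $\Lambda$) and invoke Lemma~\ref{lem:group hom restrictions} to produce a right restriction $\tilde\beta_{\geq}$ that is a genuine group homomorphism. From here, however, you diverge from the paper and run into two problems.

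First, the fallback you invoke---the slant-product identity $\Lambda(\al)=[\tau(\omega)]^{-1}$ of Appendix~\ref{app:general_covariance}---is circular at this point in the paper. Proposition~\ref{prop:obstruction to covariant right restriction as function of anomaly} is proved by appealing to Theorem~\ref{thm:classification}, whose proof goes through Proposition~\ref{prop:trivial anomaly implies stably equivalent to decoupled symmetry} and Lemma~\ref{lem:trivial anaomly allows covariant left restrictions that are group homs}, both of which rely on the very Lemma~\ref{lem:trivial anomaly implies trivial obstruction} you are trying to establish. So that route is not available.

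Second, the direct cocycle manipulation you outline is not actually carried out. You correctly extract the extra identity $\Crossing_g(k)\,\tilde\beta_{\geq}^{(g)}(\Crossing_h(k))=\phi(g,h;k)\,\Crossing_{gh}(k)$ from the fact that $\Psi_k$ is a homomorphism, but this relates the crossings in the \emph{first} $G$-slot, whereas the twisted $2$-cocycle $\lambda_g(k,l)$ lives in the other two slots. It is not at all obvious how to interlock the two relations to force $\lambda$ to be a twisted coboundary, and you do not show it; ``I expect to show'' and ``elementary but notationally heavy'' is not a proof.

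The paper's argument sidesteps all of this with a single observation you are missing: since $g\mapsto\tilde\beta_{\geq}^{(g)}$ is a group homomorphism into $\QCA(\widetilde\caA)$, it is itself a locality preserving \emph{symmetry}, not merely a right restriction. One may therefore evaluate $\Lambda$ on it. Local computability (item~\ref{propitem:obstruction locally computable} of Proposition~\ref{prop:obstruction to covariant right restriction}) then gives $\Lambda(\al)=\Lambda(\tilde\beta_{\geq})$ because the two agree far to the right, and $\Lambda(\tilde\beta_{\geq})=\Lambda(\id)=[1]$ because $\tilde\beta_{\geq}$ agrees with the identity far to the left. No cocycle chasing is needed.
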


\begin{proof}
    Suppose $\al$ has range $R$. We can stack the on-site algebra $\caA_j$ by $\End(\C^{\abs{G}})$, obtaining an extended spin chain $\widetilde \caA$. By Lemma \ref{lem:group hom restrictions} there is a right restriction $\tilde \al_{\geq}$ of $\tilde \al = \al \otimes \id$ at $j$ with defect size $5R$  and such that $g \mapsto \tilde \al_{\geq}^{(g)}$ is a group homomorphism. We can therefore regard $\tilde \al_{\geq}$ as a symmetry.

    Since $\tilde \al_{\geq}$ and $\al$ agree everywhere to the right of the site $j+5R$ it follows from local computability of the obstruction to covariant right restrictions (item \ref{propitem:obstruction locally computable} of Proposition \ref{prop:obstruction to covariant right restriction}) that $\Lambda(\al) = \Lambda(\tilde \al_{\geq})$. But $\tilde \al_{\geq}$ agrees with $\id$ everywhere to the left of the site $j-5R$, so $\Lambda(\tilde \al_{\geq}) = \Lambda(\id) = [1]$ by local computability. We conclude that $\Lambda(\al) = [1]$.
\end{proof}

\subsection{An invariant for symmetries that admit covariant right restrictions}

Suppose $\al$ is a range $R$ symmetry such that $\Lambda(\al) = [1]$. Then, Lemma \ref{lem:covariant right restrictions} shows that, if we add a $\End(\C^{\abs{G}})$-ancilla at any site $j$ and extend the symmetry to $\al \otimes \rho_{\reg}$, the new symmetry allows a covariant right restriction at $j$ with defect size $5R$. \par 
For simplicity, let us assume that we have stacked with $\rho_{\reg}$ everywhere so that $\al$ does allow covariant right restrictions of defect size $5R$ everywhere. Let $\al_{\geq}$ be such a covariant right restriction at some site $j\in \Z$, and let $\Phi(g, h)$ be fusion operators for this right restriction. Then we have
$$ \Phi(f, g) \Phi(fg, h) = \omega(f, g, h) \times \al_{\geq}^{(f)} \big( \Phi(g, h) \big) \Phi(f, gh) $$
for a 3-cocycle $\omega$ with $[\omega] = \Omega(\al)$. Covariance of $\al_{\geq}$ applied to the defining property of the fusion operators
\begin{equation} \label{eq:fusion cocycle}
    \al_{\geq}^{(g)} \circ \al_{\geq}^{(h)} = \Ad[ \Phi(g, h) ] \circ \al_{\geq}^{(gh)}
\end{equation}
implies that there are phases $\mu_{g,h}(k)$ such that
\begin{equation} \label{eq.defining-mu}
\al^{(k)}( \Phi(\bar k g k, \bar k h k) ) = \mu_{g,h}(k) \times \Phi(g, h).
\end{equation}
By straight computation, it follows that
\begin{align*}
    \al^{(kl)} \big( \Phi( \overline{kl} g kl, \overline{kl} h  kl) \big) &= \mu_{g,h}(kl) \times \Phi(g, h) \\
    &= \al^{(k)} \big(  \mu_{\bar k g k, \bar k h k}(l) \times \Phi(\bar k g k, \bar k h k) \big) = \mu_{g,h}(k) \mu_{ \bar k g k, \bar k h k}(l) \times \Phi(g, h)
\end{align*}
hence
\begin{equation} \label{eq:mu-1cocycle}
\mu_{g,h}(k) \mu_{\bar k g k, \bar k h k}(l) =  \mu_{g,h}(kl)
\end{equation}
for all $k, l, g, h \in G$. The phases $ \mu_{g,h}(k)$ for $g,h,k \in G$ give rise to a map $k \to \mu(k)$ from $G$ to the $G$-module $U(1)[G^2]$ (We refer to Appendix \ref{app:cohomology} for the relevant definitions). Eq. \eqref{eq:mu-1cocycle} guarantees that such a map is a twisted 1-cocycle 
into  $U(1)[G^2]$
(see \eqref{eq.1-cocycle-mu}) and it has an associated class $[\mu] \in H^1(G,U(1)[G^2])$. \par
In general, such a class is not an invariant of the classification of locality preserving symmetries: under a different choice of covariant right-restriction $\beta^{(g)}_{\ge} = \Ad(U_g) \circ \alpha_\ge^{(g)}$, the phases transform as
\begin{equation}\label{eq: mu freedom}
    \mu_{g,h}(k)  \longrightarrow \frac{c_g(k) c_h(k)}{c_{gh}(k)} \mu_{g,h}(k)
\end{equation}
(see Appendix \ref{app:proof of mu well defined} for details), where the map $k \mapsto c(k)$ is a representative of a class $[c] \in H^1(G,U(1)[G])$. In Lemma \ref{lem:trivial anaomly allows covariant left restrictions that are group homs} we will prove that $[c]$ is not a topological obstruction, and can be made trivial by a local extension of $\alpha$. As a consequence, an invariant for symmetries that admit covariant right restrictions can be constructed by identifying classes in $H^1(G,U(1)[G^2])$ that differ by an element of the image of $H^1(G,U(1)[G])$ under a suitable homomorphism. 

\begin{lemma}
    Let $\theta \in C^1(G, U(1)[G])$. Then $\iota(\theta) \in C^1(G,U(1)[G^2])$, pointwise defined by
    \begin{align*}
        \iota (\theta)_{g,h} (k) :=\frac{ \theta_g(k)\theta_h(k)}{\theta_{gh}(k)}
    \end{align*}
    induces an injective group homomorphism $\iota : H^1(G,U(1)[G]) \to H^1(G,U(1)[G^2])$. Hence the quotient
    \begin{equation}
        \mathfrak K := \frac{ H^1(G,U(1)[G^2])}{\iota(H^1(G,U(1)[G]))}
    \end{equation}
    is a well-defined finite abelian group. 
\end{lemma}

\begin{proof}
Let $\theta \in C^1(G,U(1)[G])$ be a 1-cocycle from $G$ to $U(1)[G]$. That $\iota ( \theta)$ is a 1-cocycle and that $\iota$ is a group homomorphism follow easily from the definitions. We prove injectivity, i.e., that $[\iota(\theta\delta\nu)] = [\iota(\theta)] \in H^1(G,U(1)[G^2])$ for a 1-coboundary $\delta\nu$ or, equivalently, that $[\iota(\delta\nu)] = [1]$, for $\nu \in U(1)[G]$. Firstly, 
    \begin{align*}
        (\delta\nu)_g (k) = \frac{k \cdot \nu_g}{\nu_g} = \frac{\nu_{\bar k g k}}{\nu_g},
    \end{align*}
    hence
    \begin{align*}
        (\iota \circ \delta (\nu))_{g,h} (k) &= \frac{(\delta \nu)_g(k)(\delta\nu)_h(k)}{ (\delta\nu)_{gh} (k)} 
        =\frac{\nu_{\bar k g k}}{\nu_g}\frac{\nu_{\bar kh k}}{\nu_h} \frac{\nu_{gh}}{\nu_{\bar k gh k }} 
        = \frac{ (\iota(\nu))_{\bar kgk,\bar khk}(k)}{ (\iota( \nu))_{g,h}(k)} 
        = (\delta\circ\iota(\nu))_{g,h}(k).
    \end{align*}
\end{proof}

We are then ready to define an invariant on the submonoid $\Sym_G^{\Lambda=[1]}$ of $\Sym_G$, which consists of all symmetries $\alpha$ for which $\Lambda(\alpha) = [1]$:
\begin{proposition} \label{prop:[mu] is well defined} 

The phases $\mu_{g,h}(k)$ satisfy the twisted 1-cocycle relation \eqref{eq:mu-1cocycle}, hence they define a class $[\mu] \in \mathfrak K$, that depends only on the symmetry $\al$, \ie it does not depend on the choice of right restriction or the site $j$. We obtain a well defined map
    \begin{align*}
    \Upsilon: \Sym_G^{\Lambda=[1]} &\longrightarrow \mathfrak K\\
    \alpha &\longrightarrow \Upsilon(\alpha) := [\mu] .
    \end{align*}
    Moreover, for symmetries $\al$ and $\beta$ of range $R$ this obstruction satisfies

    \begin{enumerate}
        \item \label{mu-item1}If $\al$ is decoupled then $\Upsilon(\al) = [1]$, the identity element of $\mathfrak K$.
        \item \label{mu-item2} $\Upsilon$ is locally computable: If $\al$ and $\beta$ act on the same spin chain $\caA$ and there is an interval $I$ of length $12R+1$ such that $\al|_{\caA_I} = \beta|_{\caA_I}$ then $\Upsilon(\al) = \Upsilon(\beta)$.
        \item \label{mu-item3} $\Upsilon$ is multiplicative under stacking: $\Upsilon(\al \otimes \beta) = \Upsilon(\al) \cdot \Upsilon(\beta)$.
        \item \label{mu-item4} $\Upsilon$ is constant on stable equivalence classes: $\al \sim \beta \, \implies \, \Upsilon(\al) = \Upsilon(\beta)$.
    \end{enumerate}
    In particular, the map $\Upsilon$ reduces to a homomorphism of monoids $\Upsilon: {(\Sym_G^{\Lambda=[1]}/\sim)} \to \mathfrak K.$
\end{proposition}

\begin{proof}
    See Appendix \ref{app:proof of mu well defined}. 
\end{proof}





\begin{lemma} \label{lem:trivial anomaly implies no fractionalisation}
    If $\al$ has trivial anomaly, then also $\Lambda(\al)$ is trivial, so $\Upsilon(\al)$ is well defined. We have in this case $\Upsilon(\al) = [1]$.
\end{lemma}

\begin{proof}
    Same as the proof of Lemma \ref{lem:trivial anomaly implies trivial obstruction}.
\end{proof}

\subsection{Covariant right restrictions that are group homomorphisms}

\begin{lemma} \label{lem:trivial anaomly allows covariant left restrictions that are group homs}
    Let $\al$ be a symmetry of range $R$ with trivial anomaly on a spin chain $\caA$. For any site $j \in \Z$ we can enlarge the on-site algebra $\caA_j$ by stacking with three local degrees of freedom $\End(\C^{\abs{G}}) \otimes \End( \C^{\abs{G}} ) \otimes \End(\C^{\abs{G}})$, obtaining an enlarged spin chain $\caA'''$. There is an extension $ \al''' = \al \otimes \rho_{\reg}^{\otimes 2} \otimes \rho_{\ad}$ of $\al$ to the enlarged spin chain and a covariant right restriction $\beta_{\geq}$ of $ \al'''$ at $j$ with defect size $7R$ such that $g \mapsto \beta_{\geq}^{(g)}$ is a group homomorphism.
\end{lemma}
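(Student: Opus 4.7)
Since $\al$ has trivial anomaly, Lemma~\ref{lem:trivial anomaly implies trivial obstruction} gives $\Lambda(\al) = [1]$. The plan is to apply Lemmas~\ref{lem:covariant right restrictions} and \ref{lem:group hom restrictions} in series, choosing the representation on the second ancilla so that the group-homomorphism modification preserves the covariance already achieved.

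First, I would invoke Lemma~\ref{lem:covariant right restrictions} with a first ancilla $\mathcal{B}_j \simeq \End(\C^{|G|})$ carrying the regular representation $\rho_{\reg}$. This yields $\al' := \al \otimes \rho_{\reg}$ on an enlarged chain $\caA'$ and a covariant right restriction $\gamma_{\geq}$ of $\al'$ at $j$ of defect size $5R$. The fusion operators $\Fusion(g,h)$ of $\gamma_{\geq}$ define a 3-cocycle representing $\ano(\al') = \ano(\al) \cdot \ano(\rho_{\reg}) = [1]$ (since $\rho_{\reg}$ is on-site, hence decoupled, hence has trivial anomaly by Proposition~\ref{prop:anomaly}); by rephasing $\Fusion$ I may then assume this cocycle is identically $1$.

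Next, I would enlarge $\caA'_j$ by a second ancilla $\mathcal{C}_j \simeq \End(\C^{|G|})$ with the conjugation representation $\rho_{\mathrm{conj}}(k)|l\rangle = |k l k^{-1}\rangle$, so $\al'' := \al' \otimes \rho_{\mathrm{conj}} = \al \otimes \rho$ with $\rho := \rho_{\reg} \otimes \rho_{\mathrm{conj}}$. Mirroring the proof of Lemma~\ref{lem:group hom restrictions}, define
$$
V(g) := \sum_{l\in G} \Fusion(g,l) \otimes |l\rangle\langle gl| \;\in\; \caA''_{[j-5R,\, j+7R]},
$$
with the ket and bra living in $\mathcal{C}_j$, and set $\beta_{\geq}^{(g)} := \Ad[V(g)^*] \circ (\gamma_{\geq} \otimes \id_{\mathcal{C}_j})^{(g)}$. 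The computation from the proof of Lemma~\ref{lem:group hom restrictions} then shows that $\beta_{\geq}$ is a right restriction of $\al''$ at $j$ of defect size $7R$ and that $g \mapsto \beta_{\geq}^{(g)}$ is a group homomorphism.

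For covariance, covariance of $\gamma_{\geq}$ and uniqueness of the fusion operators up to phase yield phases $\nu(k,g,h) \in U(1)$ with $\al'^{(k)}(\Fusion(g,h)) = \nu(k,g,h)\, \Fusion(kg\bar k, kh\bar k)$, satisfying a 2-cocycle relation in $(g,h)$ for each fixed $k$ and a 1-cocycle relation in $k$. The choice of $\rho_{\mathrm{conj}}$ on $\mathcal{C}_j$ is tailored so that, after the change of variables $m = klk^{-1}$,
$$
\al''^{(k)}(V(\bar k g k)) = \sum_{m \in G} \nu(k,\bar k g k,\bar k m k)\, \Fusion(g,m) \otimes |m\rangle\langle gm|,
$$
and covariance of $\beta_{\geq}$ thus reduces to the $m$-independence of $\nu(k,\bar k g k,\bar k m k)$. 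The main obstacle is this final trivialization of $\nu$: the residual freedom to rephase $\Fusion$ by 2-cocycles in $Z^2(G,U(1))$ compatible with the trivialized anomaly modifies $\nu$ by $\eta(kg\bar k,kh\bar k)/\eta(g,h)$, so the obstruction lies in $H^1(G, Z^2(G,U(1)))$. Showing that this class vanishes—plausibly via a bicomplex argument exploiting the triviality of $\ano(\al)$—is the technical crux of the proof.
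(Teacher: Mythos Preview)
Your overall architecture matches the paper's: first use Lemmas~\ref{lem:trivial anomaly implies trivial obstruction} and~\ref{lem:covariant right restrictions} to obtain a covariant right restriction $\gamma_{\geq}$ of $\al' = \al \otimes \rho_{\reg}$, then apply the construction of Lemma~\ref{lem:group hom restrictions} on a second ancilla to make it a group homomorphism, and finally check that covariance survives. The gap is in the last step, and you have correctly located it: with the second ancilla carrying the plain conjugation representation $\rho_{\mathrm{conj}}$, covariance of $\beta_{\geq}$ reduces to the $m$-independence of the phases $\nu(k,\bar k g k,\bar k m k)$, which you do not establish and which need not hold for that fixed choice of $\rho$.

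The paper's resolution is not to argue that this obstruction class vanishes, but to exploit the freedom in the statement of the lemma: the extension $\rho$ on the ancillas is not prescribed in advance, so one may \emph{choose} the representation on the second ancilla to absorb the phases. Concretely, writing $\mu_k(g,h)$ for your $\nu(k,g,h)$, covariance of $\gamma_{\geq}$ together with the trivialized fusion relation force each $\mu_k$ to be a 2-cocycle and to satisfy $\mu_k(lg\bar l, lh\bar l)\,\mu_l(g,h) = \mu_{kl}(g,h)$. These two relations are precisely what is needed to verify that
\[
\rho_\mu^{(k)}\big(|f\rangle\langle g|\big) \;=\; \frac{\mu_k(g,\bar f)}{\mu_k(\bar f,f)\,\mu_k(g,f)}\,|kf\bar k\rangle\langle kg\bar k|
\]
defines a genuine unitary $G$-action on the second ancilla. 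Taking $\al'' = \al' \otimes \rho_\mu$ one then computes directly, using the 2-cocycle relation for $\mu_k$ on $(gf,\bar f,f)$, that $(\al'')^{(k)}\big(V(g)\big) = V(kg\bar k)$ with no residual phase, so $\beta_{\geq}$ is covariant on the nose. Thus no cohomological trivialization is required: the phases you were trying to kill are instead \emph{built into} the definition of the representation on the second ancilla.
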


\begin{proof}
    Fix $j \in \Z$. We stack the on-site algebra $\caA_j$ with a local degree of freedom $\End(\C^{\abs{G}})$, obtaining an enlarged spin chain $\caA'$. Let $\al' = \al \otimes \rho_{\reg}$ be the enlarged symmetry on $\caA'$. Then, Lemmas \ref{lem:covariant right restrictions} and \ref{lem:trivial anomaly implies trivial obstruction} imply that there is a covariant right restriction $\al'_{\geq}$ of $\al'$ at $j$ of defect size $5R$.

    Since $\Lambda(\al') = [1]$ we have a well defined invariant $\Upsilon(\al')$, which is trivial by Lemma \ref{lem:trivial anomaly implies no fractionalisation}. Recall that, by the defining Eq. \eqref{eq.defining-mu},
    $$ (\al')^{(k)} \big( \Phi(\bar k g k, \bar k h k) \big) = \mu_{g,h}(k) \times \Phi(g, h),$$
    where $\Phi(g,h)$ are fusion operators for the covariant right restriction $\al'_{\geq}$. Triviality of $\Upsilon(\alpha')=[\mu]$ implies 
    $$\mu_{g,h}(k) =\frac{\nu_{\bar k g k, \bar k h k}}{  \nu_{g,h}}  \frac{c_g(k) c_h(k)}{c_{gh}(k)}$$
    for $U(1)$ phases $\nu_{a,b}$ and a representative $c$ of a class $[c] \in H^1(G,U(1)[G])$. Without loss, we choose new fusion operators $\nu_{g,h} \Phi(g,h)$, and denote them again by $\Phi(g,h)$. \par 
    Further stacking by  another local degree of freedom $\End(\C^{\abs{G}})$ at site $j$, we obtain an enlarged spin chain $\caA''$. We extend the symmetry to $\al'' = \al \otimes \rho_{\reg}^{\otimes 2}$ and take a right restriction $\al''_{\geq}$ with components $(\al_{\geq}'')^{(g)} = (\al'_{\geq})^{(g)} \otimes \Ad[ U(g) ]$ with
    $$ U(g) = \sum_{l \in G} c_g(l) | l \rangle \langle l |.$$
    By the twisted 1-cocycle relation \eqref{eq.twisted-1cocycle} satisfied by $c$, one easily checks that $\rho_{\reg}^{(k)} (U(\bar kg k)) = \overline{c_g(k)} U(g)$.
    It follows that $\al''_{\geq}$ is a covariant right restriction of $\al''$, with fusion operators
    $$ \Phi''(g, h) =\Phi(g, h) \otimes U(g) U(h)  U(gh)^*, $$
    from which we compute
    $$ (\al'')^{(k)}\big( \Phi''(\bar k g k, \bar k h k) \big) = \frac{c_{gh}(k)}{c_g(k) c_h(k)} \mu_{g,h}(k) \times \Phi''(g, h) = \Phi''(g, h). $$

    We now add another local degree of freedom $\End{\C^{\abs{G}}}$ at site $j$ and, as in the proof of Lemma \ref{lem:group hom restrictions}, we define unitaries $V(g) = \sum_{l} \Fusion''(g, l) \otimes |l \rangle \langle g l |$ supported on $[j-5R, j+6R]$ so that $\beta_{\geq}^{(g)} = \Ad[ V(g)^* ] \circ (\al''_{\geq} \otimes \id)$ defines a right restriction of $\al''' = \al'' \otimes \rho_{\ad}$, where $g \mapsto \beta_{\geq}^{(g)}$ is a group homomorphism. Here $\rho_{\ad}^{(k)} = \Ad[J^{(k)}]$ with $J^{(k)} | h \rangle = | k h \bar k \rangle$. 

    Moreover,
    $$
    (\al''')^{(k)} \big( V(\bar k g k) \big) = \sum_{l \in G}  \Fusion''(g, l) \otimes | l \rangle \langle gl | = V(g).
    $$
    
    This implies that $\beta_{\geq}^{(g)} = \Ad[ V(g)^* ] \circ (\al''_{\geq} \otimes \id)$ is a covariant right restriction of $\al'''$ at $j$ with defect size $7R$ which is itself a group morphism.
\end{proof}

\section{Proof of the main Theorem \ref{thm:classification}} \label{sec:proof of main thm}

In this section we prove injectivity of the map $\Omega : \Sym_G \rightarrow H^3(G, U(1))$ (Corollary \ref{cor:same anomaly implies stably equivalent}), and prove the main classification Theorem \ref{thm:classification}. 

\subsection{Injectivity of $\Omega$}

Firstly, we prove that a symmetry with trivial anomaly is stably equivalent to a symmetry that can be written as a product of mutually commuting local representations:

\begin{lemma} \label{lem:trivial anomaly implies commuting product expansion}
Let $\alpha$ be a symmetry on a spin chain $\caA$, with trivial anomaly. Then $\alpha \sim \alpha''$, where $\alpha''$ is a symmetry that can be written as a formal product
\begin{equation}
    (\al'')^{(g)} = \prod_{j \in \Z} (\al'')_{j}^{(g)},  \qquad (\al'')^{(g)}_j = \Ad(\tilde U_j(g)),
\end{equation}
where
 \begin{enumerate}
       \item  $g \mapsto  (\al_{j}'')^{(g)}$ is a group homomorphism,
       \item $(\al'')^{(g)}_j$ is supported on $[jL - 7R, (j+1)L + 8R]$ for all $g \in G$,
       \item    $[\tilde U_i(g),\tilde U_j(h)]=0$, whenever $i\neq j$, for all $g,h \in G$. 
   \end{enumerate}
\end{lemma}
\begin{proof}
    Let $R$ be the range of $\al$ and take $L = 16R$. We stack $\caA$ with local degrees of freedom $\End(\mathbb{C}^{|G|})^{\otimes 3}$ at every site $jL$ for $j \in\Z$, obtaining a spin chain $\caA'$. Then, by Lemma \ref{lem:trivial anaomly allows covariant left restrictions that are group homs}, there is a symmetry $ \al' \sim \al$ of range $R$ on the enlarged spin chain $ \caA'$ that admits covariant right restrictions $ \al_{\geq jL}'$ of defect size $7R$ at sites $jL$ for all $j\in \Z$, and such that $g \mapsto  \al_{\geq jL}'^{(g)}$ are group homomorphisms.

    Covariance implies that whenever $j \geq i + L$ we have
    \begin{equation} \label{eq:consequence of covariance}
     \al_{\geq iL}'^{(k)} \circ \al_{\geq jL}'^{(\bar k g k)} \circ \al_{\geq iL}'^{(\bar k)} =  \al'^{(k)} \circ  \al_{\geq jL}'^{(\bar k g k)} \circ \al'^{(\bar k)} =  \al_{\geq jL}'^{(g)}.
    \end{equation}
    For each $j \in \Z$, define $ \al_{j}'^{(g)} :=  \al_{\geq jL}'^{(g)} \circ ( \al_{\geq (j+1)L}'^{(g)}) ^{-1}$. Then 
    \begin{enumerate}
        \item Eq. \eqref{eq:consequence of covariance} implies that  $g \mapsto \al'^{(g)}_j$ are group homomorphisms,
        \item by definition, each $\alpha_j'$ is supported on $[jL-7R,(j+1)L+8R]$,
        \item covariance also implies that $ [\al_{i}'^{(g)}, \al_{j}'^{(h)}]=0$ whenever $i\neq j$.
    \end{enumerate}
    Pick unitaries $U_j(g) \in  \caA'$ such that $ \al_{j}'^{(g)} = \Ad[U_j(g)]$. Since $g \mapsto  \al_{j}'^{(g)}$ is a group homomorphism, the unitaries $U_j(g)$ form a projective representation of $G$. The unitaries $U_j(g)$ commute with unitaries $U_{j'}(h)$ whenever $\abs{j' - j} \geq 2 $ 
    because they have disjoint supports (see figure \ref{fig:almost commuting unitaries for trivial anomaly}).


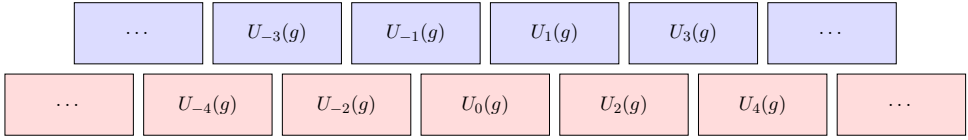
\begin{figure} [h]
\begin{center}
\resizebox{.8\linewidth}{!}{
\begin{tikzpicture}
    \def\xdist{2.7}
    \def\boxwidth{2.5}
    \def\boxheight{1.2}
    \def\ydist{1.4}

    \definecolor{softred}{RGB}{255,220,220}
    \definecolor{softblue}{RGB}{220,220,255}

    \foreach \j [evaluate=\j as \label using int(\j)] in {-4,-2,0,2,4} {
        \pgfmathsetmacro{\x}{(\j + 6)/2 * \xdist}
        \node[draw=black, fill=softred, text=black, minimum width=\boxwidth cm, minimum height=\boxheight cm] 
            at (\x, 0) {$U_{\label}(g)$};
    }

    \node[draw=black, fill=softred, text=black, minimum width=\boxwidth cm, minimum height=\boxheight cm] 
        at (0, 0) {$\dots$};
    \node[draw=black, fill=softred, text=black, minimum width=\boxwidth cm, minimum height=\boxheight cm] 
        at (6*\xdist, 0) {$\dots$};

    \foreach \j [evaluate=\j as \label using int(\j)] in {-3,-1,1,3} {
        \pgfmathsetmacro{\x}{(\j + 6)/2 * \xdist}
        \node[draw=black, fill=softblue, text=black, minimum width=\boxwidth cm, minimum height=\boxheight cm] 
            at (\x, \ydist) {$U_{\label}(g)$};
    }

    \node[draw=black, fill=softblue, text=black, minimum width=\boxwidth cm, minimum height=\boxheight cm] 
        at (\xdist/2, \ydist) {$\dots$};
    \node[draw=black, fill=softblue, text=black, minimum width=\boxwidth cm, minimum height=\boxheight cm] 
        at (5.5*\xdist, \ydist) {$\dots$};

\end{tikzpicture}}
\end{center}
    \caption{The symmetry $\alpha'$ can be seen as a conjugation by a FDQC $\prod\limits_{j\in \mathbb Z} U_j(g)$.}
    \label{fig:almost commuting unitaries for trivial anomaly}
\end{figure}


However, since $\al_{j-1}'^{(g)}$ commutes with $\al_{j}'^{(h)}$ we find that there are phases $\chi_j : G^2 \rightarrow U(1)$ such that
        \begin{equation}
            \label{eq: unitaries up to phase}
               U_{j-1}(g) U_{j}(h) = \chi_j(g, h) \, U_{j}(h) U_{j-1}(g).
        \end{equation}
    
    In order to obtain the statements of the lemma, our goal is now to upgrade these unitaries $U_j$ to unitaries $\widetilde  U_j$ with the same spatial support, and such that  $g\mapsto \widetilde  U_j(g)$ are still projective representations, but also such that 
    \begin{equation} \label{eq:unitaries commute}
        [\widetilde  U_j(g),\widetilde  U_i(h)]=0, \qquad i\neq j,
    \end{equation}
    instead of \eqref{eq: unitaries up to phase}. 
    
    In order to achieve this, we need to stack again. 
    At each site $jL$ with $j \in \Z$, we add a local degree of freedom $\caC_j$ which is a copy of  $\caA'_{[L(j-1), L(j+1)]}$, obtaining the enlarged spin chain $\caA''$. 
    We recall that the projective representations $U_{j-1}$ and $U_j$ map in $\caA_{[L(j-1), L(j+1)]}$. Let us now consider the conjugate representations $\overline U_{j-1}$ and $\overline U_{j}$ mapping in the copy $\caC_j$.  The conjugate representations will always be considered inside $\caC_j$ whereas the original representations act on the spin chain $\caA'$. 
    We now define the symmetry $\rho_{Lj}''$ on the added algebras in $\caC_j$ given by
    $$
    (\rho''_{jL})^{(g)} = \Ad[ \overline U_{j-1}(g) \overline U_{j}(g) ]
    $$
    To check that this is indeed a representation, we recall that $U_{j-1},U_j$, and hence also $\overline U_{j-1},\overline U_j$, commute up to a phase.  In fact, we have 
        \begin{equation}
            \label{eq: conj unitaries up to phase}
    \overline U_{j-1}(g) \overline U_j(h) = \bar \chi_j(g, h) \, \overline U_j(h) \overline U_{j-1}(g). 
    \end{equation}
     
    Writing $\rho'' = \bigotimes_{j \in \Z} \, \rho''_{jL}$ we thus obtain a new symmetry $\al'' = \al' \otimes \rho'' \sim \al'$ acting on the new spin chain $\caA''$. 
    Finally, we define 
    $$\widetilde U_j(g) =  \overline U_j(g) \otimes U_j(g) \otimes \overline U_{j}(g)  \in \caC_{j} \otimes \caA' \otimes  \caC_{j+1} 
    $$
    (see figure \ref{fig:trivial anomaly - commuting unitaries}).

\begin{figure} [h]
    \centering

\resizebox{.8\linewidth}{!}{

\begin{tikzpicture}
    \def\splitgap{0.2}  
    \def\xdist{2.7}
    \def\boxwidth{2.5}
    \def\boxheight{1.2}
    \def\ydist{1.4}
    \def\greenheight{3.1}
    \def\greenwidth{0.35} 
    \def\greendist{2.55}  

    \definecolor{softred}{RGB}{255,220,220}
    \definecolor{softblue}{RGB}{220,220,255}
    \definecolor{softgreen}{RGB}{220,255,220}


    \foreach \i [evaluate=\i as \j using int(\i - 7)] in {2,3,4,5,6,7,8,9,10,11,12,13} {
    \pgfmathsetmacro{\x}{\i/2 * \xdist - 3*\xdist/4}
    \pgfmathsetmacro{\halfheight}{\greenheight/2}
    \draw (\x,\greendist + \ydist + \halfheight/2 + \splitgap/2) -- (\x,-0.8);
    \node[label] at (\x,-1.05) {$\j L$};
    }

    \foreach \j [evaluate=\j as \label using int(\j)] in {-4,-2,0,2,4} {
        \pgfmathsetmacro{\x}{(\j + 6)/2 * \xdist}
        \node[draw=black, fill=softred, text=black, minimum width=\boxwidth cm, minimum height=\boxheight cm] 
            at (\x, 0) {$U_{\label}(g)$};
    }

    \foreach \j [evaluate=\j as \label using int(\j)] in {-5,-3,-1,1,3,5} {
        \pgfmathsetmacro{\x}{(\j + 6)/2 * \xdist}
        \node[draw=black, fill=softblue, text=black, minimum width=\boxwidth cm, minimum height=\boxheight cm] 
            at (\x, \ydist) {$U_{\label}(g)$};
    }

\foreach \i [evaluate=\i as \j using int(\i - 7)] in {2,3,4,5,6,7,8,9,10,11,12,13} {
    \pgfmathsetmacro{\x}{\i/2 * \xdist - 3*\xdist/4}
    \pgfmathsetmacro{\halfheight}{\greenheight/2}

    \ifodd\i
        \def\topcolor{softblue}
        \def\bottomcolor{softred}
    \else
        \def\topcolor{softred}
        \def\bottomcolor{softblue}
    \fi

    \node[draw=black, fill=\bottomcolor, minimum width=\greenwidth cm, minimum height=\halfheight cm, align=center]
        at (\x, \greendist + \ydist - \halfheight/2 - \splitgap/2)
        {\rotatebox{90}{$\overline U_{\j}(g)$}};
    
    \node[draw=black, fill=\topcolor, minimum width=\greenwidth cm, minimum height=\halfheight cm, align=center]
        at (\x, \greendist + \ydist + \halfheight/2 + \splitgap/2)
        {\rotatebox{90}{$\overline U_{\number\numexpr\j-1\relax}(g)$}};
}

\node[draw=black, fill=softred, minimum width=\boxwidth cm, minimum height=\boxheight cm]
    at (0, 0) {$\dots$};
\node[draw=black, fill=softred, minimum width=\boxwidth cm, minimum height=\boxheight cm]
    at (6*\xdist, 0) {$\dots$};

\node[draw=black, fill=softblue, minimum width=\boxwidth cm, minimum height=\boxheight cm]
    at (\xdist/2, \ydist) {$\dots$};
\node[draw=black, fill=softblue, minimum width=\boxwidth cm, minimum height=\boxheight cm]
    at (5.5*\xdist, \ydist) {$\dots$};
\end{tikzpicture}}
    \caption{The unitaries $\tilde U_j(g)$ are defined as tensor products of $U_j(g)$ with projective representations $\overline U_j(g)$ defined on appropriate degrees of freedom $\mathcal C_j,\mathcal C_{j+1}$. }
    \label{fig:trivial anomaly - commuting unitaries}
\end{figure}
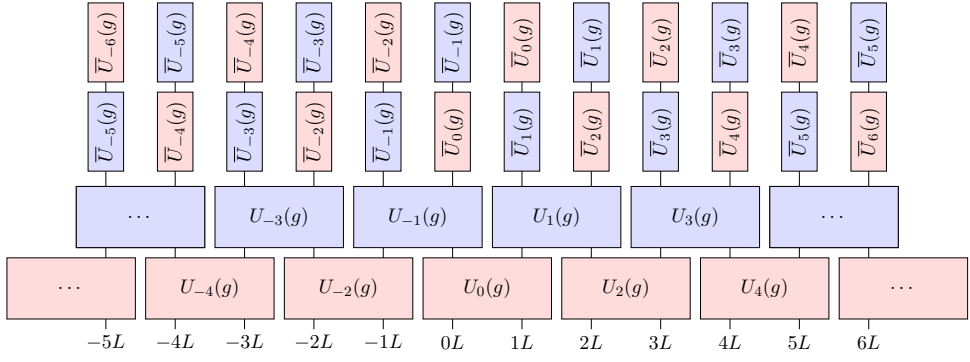
    
    We note that $\widetilde U_j$  is a projective representation since it is a tensor product of projective representations. 
    Writing $\al_{j}''^{(g)} := \Ad[ \widetilde U_j(g) ]$, we can verify that all properties of $\al_j'$ listed below \eqref{eq:consequence of covariance} hold for $\al_j''$ as well, in particular, 
    $\al'' = \prod_{j \in \Z} \al_{j}''$.  
    However, as announced, we now have the stronger commutation property \eqref{eq:unitaries commute}.
    This is checked by an explicit computation using \eqref{eq: unitaries up to phase}, \eqref{eq: conj unitaries up to phase} and $ \chi_j(g, h)\bar \chi_j(g, h)=1$.  

    \end{proof}

\begin{proposition} \label{prop:trivial anomaly implies stably equivalent to decoupled symmetry}
    Let $\al$ be a symmetry with trivial anomaly. Then $\al$ is stably equivalent to a decoupled symmetry.
\end{proposition}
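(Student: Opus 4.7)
The plan is to use Lemma \ref{lem:trivial anaomly allows covariant left restrictions that are group homs} simultaneously at every site of a sparse sequence $\ldots < j_{-1} < j_0 < j_1 < \ldots$ with spacing strictly greater than $14R$, paying the price of stacking $\al$ with a decoupled ancilla symmetry supported on these sites. This reduces the problem to showing that the resulting enlarged symmetry $\al''$ on the enlarged chain $\caA''$ is stably equivalent to a decoupled symmetry, and it furnishes at each $j_n$ a covariant right restriction $\beta_{\geq j_n}$ of defect size $7R$ that is also a group homomorphism.

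The first step will be to split $\al''$ at each $j = j_n$ into commuting left/right halves. Defining $\beta_{<j}^{(g)} := (\beta_{\geq j}^{(g)})^{-1} \circ \al''^{(g)}$, the covariance identity $\al''^{(k)} \circ \beta_{\geq j}^{(h)} = \beta_{\geq j}^{(k h \bar k)} \circ \al''^{(k)}$ together with the group-homomorphism property of $\beta_{\geq j}$ gives that $g \mapsto \beta_{<j}^{(g)}$ is itself a group homomorphism, that $\beta_{<j}$ is a left restriction of $\al''$, and --- crucially --- that $\beta_{\geq j}^{(g)}$ and $\beta_{<j}^{(h)}$ commute for all $g, h \in G$. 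Setting $I_n := [j_n - 7R, j_{n+1} + 7R]$, I would then define the interval symmetries $\gamma_n^{(g)} := \beta_{\geq j_n}^{(g)} \circ (\beta_{\geq j_{n+1}}^{(g)})^{-1}$. The $14R$ spacing makes the supports of $\beta_{<j_n}$ and $\beta_{\geq j_{n+1}}$ disjoint, which together with the preceding commutation delivers the identity
$$
(\beta_{\geq j_{n+1}}^{(g)})^{-1}\, \beta_{\geq j_n}^{(h)} \;=\; \beta_{\geq j_n}^{(h)} \, (\beta_{\geq j_{n+1}}^{(\bar h g h)})^{-1}.
$$
From this one checks that each $\gamma_n$ is a group homomorphism supported in $I_n$; telescoping $\beta_{\geq j_n} = \gamma_n \circ \beta_{\geq j_{n+1}}$ and using $\beta_{\geq j_k} \to \id$, $\beta_{\geq j_{-k}} \to \al''$ strongly as $k \to \infty$ yields $\al''^{(g)} = \lim_k(\gamma_{-k}^{(g)} \circ \cdots \circ \gamma_{k-1}^{(g)})$, with $\gamma_n$ and $\gamma_m$ commuting whenever $|n - m| \geq 2$ by disjoint supports.

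The hard part will be converting this telescoping presentation into a genuinely decoupled symmetry. Writing $\gamma_n^{(g)} = \Ad[U_n^{(g)}]$ for a unitary $U_n^{(g)} \in \caA''_{I_n}$, same-parity $U_n^{(g)}$'s mutually commute, so $\al''^{(g)}$ is already a depth-two circuit of commuting layers; however, adjacent $\gamma_n, \gamma_{n+1}$ fail to commute on their overlap $[j_{n+1} - 7R, j_{n+1} + 7R]$, and consequently the product does not literally factorize over any fixed partition of $\Z$. To finish, I would stack with one further layer of decoupled ancillas and construct a $g$-independent FDQC $\Gamma$ that absorbs these overlaps into the ancillas, so that $\Gamma^{-1} \circ \al''^{(g)} \circ \Gamma$ decouples along a fixed partition. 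This disentangling step is the main technical hurdle and is precisely what the operator $\mathcal{W}$ constructed in \cite{seifnashri2025disentangling} accomplishes, as acknowledged in the paper's preamble.
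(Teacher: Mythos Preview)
Your setup --- applying Lemma~\ref{lem:trivial anaomly allows covariant left restrictions that are group homs} at a sparse grid and telescoping the covariant right restrictions into interval symmetries $\gamma_n$ --- is exactly the paper's route. The gap is that you do not carry out the disentangling step: you correctly identify it as the main hurdle but then defer to \cite{seifnashri2025disentangling}, whereas the paper supplies an explicit self-contained construction. Note first that the covariance identity you derived actually forces $\gamma_n^{(g)}$ and $\gamma_m^{(h)}$ to commute \emph{as automorphisms} for all $n\neq m$ (not only $|n-m|\geq 2$); it is only the implementing unitaries $U_n(g)$ that commute with $U_{n+1}(h)$ merely up to a phase $\chi_{n+1}(g,h)$. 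The paper exploits precisely this.

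The paper's disentangler is built in two moves. First, stack at each cut with a copy $\caC_j$ of the overlap algebra carrying the conjugate projective representations, and replace $U_n$ by $\widetilde U_n(g)=\overline U_n(g)\otimes U_n(g)\otimes\overline U_n(g)\in\caC_n\otimes\caA'\otimes\caC_{n+1}$; the conjugate phases $\bar\chi$ cancel $\chi$, so $[\widetilde U_n(g),\widetilde U_m(h)]=0$ genuinely for all $n\neq m$. Second, stack with a regular-representation ancilla $\caB_{m(n)}\simeq\End(\C^{|G|})$ at each midpoint and conjugate by the FDQC $\gamma=\prod_n\Ad[V_n]$ with $V_n=\sum_h\widetilde U_n(h)\otimes|h\rangle\langle h|$. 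Because $\widetilde U_n$ is a projective representation, $V_n^*\big(\widetilde U_n(g)\otimes L^{(g)}\big)V_n$ is a unitary supported on $\caB_{m(n)}$ alone, and the honest commutation $[\widetilde U_n,\widetilde U_m]=0$ guarantees that conjugation by $V_m$ for $m\neq n$ does not disturb the $n$th factor. These two explicit moves are the content you have left unproved.
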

\begin{proof}
By Lemma \ref{lem:trivial anomaly implies commuting product expansion}, the symmetry $\al$ is stably equivalent to a symmetry $\alpha''$ that splits into local components which commute with each other. We proceed in constructing an explicit equivalence between $\alpha''$ and a block partitioned QCA. Let $R$ denote the range of $\alpha$, and let $L=16R$, exactly as in the proof of the previous Lemma. We use freely the notation from Lemma \ref{lem:trivial anomaly implies commuting product expansion}. 
      
    The decoupling of $\al''$ requires more stacking.
    For each $j \in \Z$ let $m(j) = jL + 8R$ be the site sitting half way between $jL$ and $(j+1)L$. Stack with a spin chain $\caB$ with on-site algebras $\caB_{m(j)} = \End(\C^{\abs{G}})$ for all $j$ and $\caB_{j'} \simeq \C$ at all other sites $j'$ to obtain a new spin chain $ \caA'''$. Let $\rho'''_{\reg}$ be the decoupled symmetry on $\caB$ which acts with the left regular representation $\rho_{\reg}$ on each $\caB_{m(j)}$. Let $ \al''' =  \al'' \otimes \rho'''_{\reg} \sim  \al'' \sim \al$. We now explicitly construct a FDQC $\gamma$ such that $\gamma^{-1} \circ \al''' \circ \gamma$ is block partitioned.
    Define unitaries
    $$
     V_j := \sum_{h \in G} \, \widetilde U_j(h) \otimes | h \rangle \langle h |
    $$
    where the second tensor factor corresponds to $\caB_{m(j)}$. Let $\gamma_j = \Ad[V_j]$, then $\gamma_j$ has the same support as $ \al''_{j}$. Since $\widetilde U_j$ is a projective representation we have
    \begin{align*}
        W_j(g) := V_j^* \, \big( \widetilde U_j(g) \otimes (\rho_{\reg})_{m(j)}^{(g)} \big) \, V_j = \sum_{h \in G} \widetilde U_j(gh)^* \widetilde U_j(g) \widetilde U_j(h) \otimes |gh \rangle \langle g| = \sum_{h \in G} \, \I \, \otimes \, c_j(g, h) | gh \rangle \langle h|,
    \end{align*}
    for some 2-cocycle $c_j : G^2 \rightarrow U(1)$. Here $(\rho_{\reg})_{m(j)}^{(g)} \in \caB_{m(j)}$ is the left regular action on site $m(j)$. The unitary $W_j(g)$ is seen to act non-trivially only on $\caB_{m(j)}$ and so
    $$
    \gamma_j^{-1} \circ \big(  \al_{j}''^{(g)} \, \otimes \,  \rho_{\reg}'''^{(g)} \big) \otimes  \gamma_j = \Ad[W_j(g)]
    $$
    is a $G$-action which acts non-trivially only on $\caB_{m(j)}$.   
    Moreover, from Eq. \eqref{eq:unitaries commute} we find that each $\gamma_j$ commutes with every component of $\al_{i}''$ whenever $i \neq j$. Defining the FDQC $\gamma = \prod_{j \in \Z} \gamma_j$ and recalling that $ \al'' = \prod_{j \in \Z}  \al''_{j}$ we therefore find that
    $$
    \beta^{(g)} := \gamma^{-1} \circ  \al'''^{(g)} \circ \gamma = \gamma^{-1} \circ ( \al''^{(g)} \otimes \rho_{\reg}'''^{(g)}) \circ \gamma = \prod_{j \in \Z} \Ad[W_j(g)].
    $$
    We conclude that $\al \sim \beta = \bigotimes_{j \in \Z} \, \Ad[W_j]$ is a decoupled symmetry.
\end{proof}

\begin{corollary} \label{cor:same anomaly implies stably equivalent}
    The map $\Omega: \Sym_G \rightarrow H^3(G,U(1))$ is injective. In other words, let $\al$ and $\beta$ be symmetries on spin chains $\caA$ and $\caB$, respectively. If $\al$ and $\beta$ have the same anomaly, then they are stably equivalent.
\end{corollary}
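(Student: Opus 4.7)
The plan is to exhibit an ``inverse'' of $\beta$ in the stable-equivalence monoid via spatial reflection, and then cancel. I will construct a symmetry $\beta^{\mathrm{rev}}$ with $\ano(\beta^{\mathrm{rev}}) = \ano(\beta)^{-1}$, so that both $\al \otimes \beta^{\mathrm{rev}}$ and $\beta \otimes \beta^{\mathrm{rev}}$ have trivial anomaly. Proposition \ref{prop:trivial anomaly implies stably equivalent to decoupled symmetry} then places both in the class of decoupled symmetries, and a short cancellation will give $\al \sim \beta$.

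To construct $\beta^{\mathrm{rev}}$, I will reflect $\beta$ spatially: if $\caB$ is the spin chain on which $\beta$ lives, define $\caB^{\mathrm{rev}}$ to be the spin chain with on-site algebras $(\caB^{\mathrm{rev}})_j := \caB_{-j}$, and let $\beta^{\mathrm{rev}}$ act on $\caB^{\mathrm{rev}}$ via the identification $j \leftrightarrow -j$. Under this identification, a right restriction of $\beta^{\mathrm{rev}}$ at site $j$ is exactly a left restriction of $\beta$ at site $-j$, with matching defect size, fusion operators, and cocycle phases. Hence the anomaly of $\beta^{\mathrm{rev}}$, computed via right restrictions, coincides with the ``left anomaly'' $\ano_L(\beta)$, which equals $\ano(\beta)^{-1}$ by Remark \ref{rem:could use left restrictions}.

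The remainder is monoid algebra. By multiplicativity (item \ref{propitem:anomaly multiplicative} of Proposition \ref{prop:anomaly}) and the hypothesis $\ano(\al) = \ano(\beta)$, both $\al \otimes \beta^{\mathrm{rev}}$ and $\beta \otimes \beta^{\mathrm{rev}}$ have trivial anomaly, so Proposition \ref{prop:trivial anomaly implies stably equivalent to decoupled symmetry} supplies decoupled $\delta_1, \delta_2$ with $\al \otimes \beta^{\mathrm{rev}} \sim \delta_1$ and $\beta \otimes \beta^{\mathrm{rev}} \sim \delta_2$. Stacking both relations with $\beta$ (stable equivalence is manifestly compatible with stacking, by applying the witnessing FDQC trivially on the added factor) yields $\delta_1 \otimes \beta \sim \al \otimes \beta^{\mathrm{rev}} \otimes \beta \sim \al \otimes \delta_2$. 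Since stacking with a decoupled symmetry does not alter the stable equivalence class (take the decoupled ancilla as $\beta'$ in the definition of $\sim$ and the trivial symmetry as $\beta$), I conclude $\beta \sim \delta_1 \otimes \beta \sim \al \otimes \delta_2 \sim \al$.

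I expect the only real content of the argument to be the reflection step: carefully verifying that the right-restriction anomaly of $\beta^{\mathrm{rev}}$ really equals $\ano_L(\beta)$, which amounts to tracking how the cocycle equation \eqref{eq-fusion.operators} transforms under $j \mapsto -j$. Everything else is formal cancellation in $(\Sym_G/\sim)$.
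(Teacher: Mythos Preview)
Your argument is correct and structurally parallel to the paper's, but the key input differs. The paper obtains an inverse by invoking the explicit examples of Section~\ref{sec:examples}: it picks any $\gamma$ with $\ano(\gamma)=\ano(\al)^{-1}$ from that construction and then runs the same cancellation $\al \sim \al \otimes (\gamma \otimes \beta) = (\al \otimes \gamma) \otimes \beta \sim \beta$. You instead manufacture the inverse intrinsically from $\beta$ via spatial reflection, appealing to Remark~\ref{rem:could use left restrictions} rather than Section~\ref{sec:examples}. This buys you something real: your proof shows that the injectivity direction of the classification (equal anomaly $\Rightarrow$ stably equivalent) is logically independent of the explicit model-building in Section~\ref{sec:examples}, which the paper only needs for surjectivity. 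The price is that Remark~\ref{rem:could use left restrictions} is asserted without proof in the paper, so your reflection step carries the burden you correctly identify; the verification is routine (a right restriction of $\beta^{\mathrm{rev}}$ at $j$ is literally a left restriction of $\beta$ at $-j$ under the relabeling, with identical fusion operators and cocycle), but should be spelled out if you want the argument to stand alone.
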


\begin{proof}
    Let $\gamma$ be a symmetry with anomaly $\ano(\gamma) = \ano(\al)^{-1}$, defined on a spin chain $\caC$. (One could take the symmetry constructed in Section \ref{sec:examples}).

    Using multiplicativity of the anomaly under stacking and Proposition \ref{prop:trivial anomaly implies stably equivalent to decoupled symmetry} we find
    $$
    \al \sim \al \otimes (\gamma \otimes \beta) = (\al \otimes \gamma) \otimes \beta \sim \beta
    $$
    as required.
\end{proof}


\subsection{Classification of locality preserving symmetries}

We are now ready to prove our main Theorem \ref{thm:classification}.\\

\begin{proofof}[Theorem \ref{thm:classification}]
    To show that the monoid $(\Sym_G / \sim)$ is in fact a group it suffices to show that each class has an inverse. Let $\al$ be a symmetry. Then by Section \ref{sec:examples} there exists a symmetry $\beta$ such that $\Omega(\beta) = \Omega(\al)^{-1}$. Since the anomaly is multiplicative under stacking we have that $\al \otimes \beta$ has trivial anomaly. By Proposition \ref{prop:trivial anomaly implies stably equivalent to decoupled symmetry} the stacked symmetry $\al \otimes \beta$ is stably equivalent to a decoupled symmetry. This shows that the stable equivalence class of $\beta$ is the inverse of the class of $\al$ in $(\Sym_G / \sim)$.

    The statement that two symmetries are stably equivalent if and only if their anomalies are equal follows from item 2 of  Proposition \ref{prop:anomaly} and Corollary $\ref{cor:same anomaly implies stably equivalent}$. The fact that $\Omega$ lifts to a group homomorphism from $(\Sym_G/\sim)$ to $H^3(G, U(1))$ also follows from Proposition \ref{prop:anomaly}. That this homomorphism is actually an isomorphism follows from the examples of Section \ref{sec:examples}, which provide for each $[\omega] \in H^3(G, U(1))$ a symmetry $\al$ with $\ano(\al) = [\omega]$.
\end{proofof}

\appendix
\addappheadtotoc
\addtocontents{toc}{\protect\setcounter{tocdepth}{0}} 

\section{(Twisted) group cohomology} \label{app:cohomology}

\subsection{Group cohomology}

We give the necessary definitions of (twisted) group cohomology and the slant product. For an in depth treatment, see for example the monograph \cite{brown2012cohomology}. Recall that for a group $G$, a \textit{$G$-module} is an abelian group $M$ equipped with a left $G$-action $\_ \cdot \_\ : G \times M \longrightarrow M$ satisfying
\begin{enumerate}
    \item $g \cdot (xy) = (g\cdot x)(g\cdot y)$ for all $x,y \in M$, $g \in G$;
    \item $g \cdot 0 = 0$,
\end{enumerate}
where we use multiplicative notation for the module operation. Let $G$ be a group and $M$ a $G$-module. Consider the cochain complex 
\begin{align*}
    \sdots\overset{\delta}{\longrightarrow} C^n \overset{\delta}{\longrightarrow} C^{n+1} \overset{\delta}{\longrightarrow} \sdots
\end{align*}
where $C^n:= C^n(G,M)$ is the collection of all functions $G^n \to M$ and $\delta$ is the differential map which sends a function $C^n \ni \theta: G^n \to M$ to $\delta \theta \in C^{n+1}$, defined by 
\begin{equation} \label{eq.differential-map}
    (\delta \theta)(g_1,\sdots,g_{n+1}) = g_1 \cdot \theta(g_2,\sdots, g_{n+1})\left(\prod\limits_{j=1}^n \theta(g_1, \sdots, g_j g_{j+1}, \sdots, g_{n+1})^{(-1)^j} \right)\theta(g_1,\sdots, g_n)^{(-1)^{n+1}}. 
\end{equation} 
The differential map $\delta$ is also called the coboundary map, and it satisfies $(\delta \circ \delta)(\theta) = 1$. The \emph{$n$-cocycles} $Z^n$ are the functions in $C^n$ in the kernel of the coboundary map $\delta$.
The \emph{$n$-coboundaries} $B^n := B^n(G,M)$ are the functions of $C^n$ in the image of $\delta : C^{n-1}\to C^n$. 
The \emph{$n$-th cohomology group} $H^n(G,M)$ is defined by $H^n(G,M) = Z^n / B^n$ together with the multiplication $[\theta ][\phi] = [\theta\phi]$ where $(\theta\phi)(g_1,\sdots,g_n) = \theta(g_1,\sdots,g_n)\phi(g_1,\sdots,g_n)$ for two representatives $\theta, \phi \in Z^n$. The quotient is well defined because of the property $(\delta \circ \delta)(\theta)=1$. 

In the following, and throughout the rest of this paper, the words \emph{cocycle}, \emph{coboundary} and \emph{cohomology} will always refer to the case $M=U(1)$ with the trivial action of $G$ on $U(1)$, namely, the action $g \cdot \theta = \theta$ for all $g \in G$.

\subsubsection{Third group cohomology}

Since the anomaly of a locality preserving symmetry takes values in degree three group cohomology, we explicitly state both the cocycle and coboundary conditions: A map $\omega: G^3\to U(1)$ is a (normal) 3-cocycle if 
\begin{equation}\label{eq:3cocycle}
    (\delta \omega) (g_1,g_2,g_3,g_4) = \frac{\omega(g_2,g_3,g_4)\omega(g_1,g_2g_3,g_4)\omega
    (g_1,g_2,g_3)}{\omega(g_1g_2,g_3, g_4), \omega(g_1,g_2,g_3g_4)} = 1.
\end{equation}
Two 3-cocycles are equivalent (represent the same cohomology class) if they are equal up to multiplication by a 3-coboundary
\begin{equation}\label{eq:3coboundary}
  (\delta\xi)(g_1,g_2,g_3) = \frac{\xi(g_2,g_3)\xi(g_1,g_2g_3)}{\xi(g_1g_2,g_3)\xi(g_1,g_2)}
\end{equation}
for some $\xi:G^2 \to U(1)$. 

\subsection{Twisted group cohomology}


The $G$-module $U(1)[G]$ consists of maps $\lambda: G \to U(1)$, assigning $g\mapsto\lambda_g$, with a left $G$-action
$$(k \cdot \lambda)_g = \lambda_{\bar k g k}, \qquad k,g \in G.$$ This is a left action as $(k\cdot l\cdot \lambda)_g = (l\cdot \lambda)_{\bar k g k } = \lambda_{\overline{kl}g kl } = (kl\cdot\lambda)_g$.
Abelian multiplication is given pointwise by $(\lambda\lambda')_g := \lambda_g\lambda'_g$, where $\lambda_g\lambda'_g$ is $U(1)$ multiplication. The module $U(1)[G]$ is a $G$-graded crossed module
$\bigoplus_{g \in G} U(1)_g,$
where  each $U(1)_g$ is a copy of the unitary group $U(1)$ and $G$ acts as conjugation on the label, \ie  $k \cdot U(1)_g \subseteq U(1)_{kg\bar k}$ \cite{etingof2015tensor, propitius1995topological}.

\begin{example} [First degree twisted cohomology]

By applying the definition of the differential \eqref{eq.differential-map}, we check that a map $c: G \to U(1)[G]$ is a twisted 1-cocycle if
\begin{equation}\label{eq.twisted-1cocycle}
    (\delta c)_{h} (g_1,g_2) = \frac{c_{\bar g_1 h g_1}(g_2) c_h(g_1) }{c_h(g_1g_2)} = 1,\qquad h,g_1,g_2 \in G, 
\end{equation}
and a twisted 1-coboundary is a map
\begin{equation}
    (\delta \eta)_h(g) = \frac{\eta_{\bar ghg}}{\eta_h}, \qquad g,h \in G,
\end{equation}
for some $\eta \in U(1)[G]$. 
\end{example}

\begin{example} [Second degree twisted cohomology]

A map $\lambda: G^2\to U(1)[G]$ is a twisted 2-cocycle if
\begin{equation}\label{eq:twisted2cocycle}
    \frac{\lambda_{\bar g_1 h g_1}(g_2,g_3)\lambda_h(g_1,g_2g_3)}{\lambda_h(g_1g_2,g_3)\lambda_h(g_1,g_2)} = 1, \qquad \text{for all } h, g_1,g_2,g_3 \in G.
\end{equation}
Two twisted 2-cocycles are equivalent (represent the same twisted cohomology class) if they are equal up to multiplication by a twisted 2-coboundary, i.e., a map 
\begin{equation}\label{eq:twisted2coboundary}
(\delta\epsilon)_{h}(g_1,g_2) = \frac{\epsilon_{\bar g_1 h g_1}(g_2)\epsilon_h(g_1)}{\epsilon_h(g_1g_2)}
\end{equation}
for some $\epsilon: G\to U(1)[G]$. 
\end{example}

\subsubsection{Slant product}

Given a 3-cocycle $\omega$ one obtains a twisted 2-cocycle $\tau(\omega)$ defined by
\begin{equation} \label{eq:slant product}
    \tau(\omega)_g(k, l) = \frac{  \omega(g, k, l) \, \omega(k, l, \bar l \, \bar k g k l)  }{ \omega(k, \bar k g k, l)   }
\end{equation}
for all $g, k, l \in G$. This lifts to a well defined group homomorphism $\tau : H^3(G, U(1)) \rightarrow H^2(G, U(1)[G])$ called the \emph{slant product} (also called the \emph{loop transgression}). 

\subsubsection{Twisted group cohomology with indices in $U(1)[G^2]$}


The $G$-module $U(1)[G^2]$ consists of maps $\mu: G^2 \to U(1)$ that assign $(g,h) \mapsto \mu_{g,h}$, together with the left $G$-action  $$(k\cdot \mu)_{g,h} = \mu_{\bar k gk, \bar k h k} $$ and the twisted cohomology groups $H^n(G,U(1)[G^2])$ are constructed analogously as $H^n(G,U(1)[G^2])$.
\begin{example} A 1-cocycle $\mu: G \to U(1)[G^2]$ is a map satisfying
\begin{equation} \label{eq.1-cocycle-mu}
    \dfrac{\mu_{\bar k  g k, \bar k h k}(l)\mu_{g,h}(k)}{\mu_{g,h} (kl)} = 1, \qquad g,h,k,l \in G.
\end{equation}
Similarly, a 1-coboundary is a map $\epsilon: G \to U(1)[G^2]$ satisfying 
\begin{align} \label{eq.1coboundary-mu}
    \epsilon_{g,h} (k) = (\delta \nu)_{g,h}(k) = \frac{\nu_{\bar k g k, \bar k h k}}{  \nu_{g,h}}, 
\end{align}
for some $\nu \in U(1)[G^2]$.
\end{example}

\section{Proof of Proposition \ref{prop:anomaly}} \label{app:proof of anomaly proposition}

That the phases $\omega_j$ form a 3-cocycle, as well as the fact that the class $[\omega_j]$ is independent of the choice of right restriction and the choice of fusion operators is well known, see for example~\cite[Appendix B]{else2014classifying} for proofs. Since a right restriction at $j'$ can be viewed as a right restriction at $j$ with perhaps a different defect size, this also shows independence from $j$. It follows that $\ano(\al)$ is well defined.

Let us now prove items \ref{propitem:anomaly decoupled implies trivial} through \ref{propitem:anomaly constant on stable equivalence classes}.
\begin{enumerate}
    \item If $\al$ is decoupled then we can take a right restriction $\al_{\geq}$ such that $g \mapsto \al_{\geq}^{(g)}$ is a group homomorphism. The associated fusion operators can all be taken to be the identity so the associated 3-cocycle is identically one. The anomaly is therefore trivial.

    \item If $I = [a, b]$ is an interval of length $8R + 1$ such that $\al|_{\caA_I} = \beta|_{\caA_{I}}$ then there is a site $j \in [a + 4R, b-4R]$ and right restrictions $\al_{\geq}$ and $\beta_{\geq}$ of defect size $2R$ at $j$ such that $\al_{\geq}|_{\caA_I} = \beta_{\geq}|_{\caA_I}$. Let $\Fusion_{\al}(g, h)$ and $\Fusion_{\beta}(g, h)$ be fusion operators associated to these right restrictions. Since the automorphisms
    \begin{align*}
        \Ad[\Fusion_{\al}(g, h)] &= \al_{\geq}^{(g)} \circ \al_{\geq j}^{(h)} \circ \big( \al_{\geq j}^{(gh)} \big)^{-1} \\
        \Ad[\Fusion_{\beta}(g, h)] &= \beta_{\geq}^{(g)} \circ \beta_{\geq j}^{(h)} \circ \big( \beta_{\geq j}^{(gh)} \big)^{-1}
    \end{align*}
    agree on $\caA_{[a+2R, b-2R]}$ and the fusion operators are supported on $[a+2R, b-2R]$, it follows that $\Fusion_{\al}(g, h)$ and $\Fusion_{\beta}(g, h)$ are equal up to phases for all $g, h \in G$. It follows that the associated 3-cocycles are equal up to a coboundary and therefore represent the same element of $H^3(G, U(1))$.

    \item Let $\al_{\geq}$ and $\beta_{\geq}$ be right restrictions at some $j \in \Z$ of $\al$ and $\beta$ respectively. Let $\Fusion_{\al}(g, h)$ and $\Fusion_{\beta}(g, h)$ be associated fusion operators and $\omega_{\al}$ and $\omega_{\beta}$ the corresponding 3-cocycles. Then $\al_{\geq} \otimes \beta_{\geq}$ is a right restriction of $\al \otimes \beta$ with associated fusion operators $\Fusion(g, h) = \Fusion_{\al}(g, h) \otimes \Fusion_{\beta}(g, h)$ and corresponding 3-cocycle $\omega = \omega_{\al} \cdot \omega_{\beta}$. Therefore $\ano(\al \otimes \beta) = [\omega] = [\omega_{\al} \cdot \omega_{\beta}] = [\omega_{\al}] \cdot [\omega_{\beta}] = \ano(\al) \cdot \ano(\beta)$, as required.
    
    \item Suppose $\al \sim_0 \beta$ are symmetries whose ranges are bounded by $R$, defined on the same spin chain $\caA$. Then there is a FDQC $\gamma$ such that $\beta = \gamma^{-1} \circ \al \circ \gamma$. Since $\gamma$ is a FDQC there is a $C > 0$ and a decomposition $\gamma = \gamma_L \circ \gamma_R$ of $\gamma$ into FDQCs $\gamma_L$ and $\gamma_R$ such that $\gamma_L$ acts as identity on $\caA_{\geq C}$ and $\gamma_R$ acts as identity on $\caA_{\leq -C}$. Then
    $$ \al \sim_0 \gamma_L^{-1} \circ \al \circ \gamma_L \sim_0 \gamma_R^{-1} \circ \gamma_L^{-1} \circ \al \circ \gamma_L \circ \gamma_R = \beta.$$
    But $\al$ and $\gamma_L^{-1} \circ \al \circ \gamma_L$ agree on $\caA_{\geq (C+R)}$ so by local computability $\ano(\al) = \ano( \gamma_L^{-1} \circ \al \circ \gamma_L )$. Similarly $\gamma_L^{-1} \circ \al \circ \gamma_L$ and $\beta$ agree on $\caA_{\leq -(C+R)}$ so by local computability $\ano( \gamma_L^{-1} \circ \al \circ \gamma_L) = \ano(\beta)$, yielding $\ano(\al) = \ano(\beta)$.

    If $\al'$ is a decoupled symmetry then $\ano(\al') = [1]$ by item \ref{propitem:anomaly decoupled implies trivial} and $\ano(\al \otimes \al') = \ano(\al) \cdot \ano(\al') = \ano(\al)$ by item \ref{propitem:anomaly multiplicative}. Together with the invariance of the anomaly under $\sim_0$, this shows that the anomaly is constant on stable equivalence classes. \hfill$\qedsymbol$\vskip.5cm
\end{enumerate}
\section{Proof of Proposition \ref{prop:obstruction to covariant right restriction}} \label{app:proof of proposition covariant obstruction}

We first show that the class $[\lambda] \in H^2(G, U(1)[G])$ is independent of the choice of right restriction. Let $\al_{\geq}$ and $\tilde \al_{\geq}$ be right restriction of the symmetry $\al$. Then there are local unitaries $\{W_g\}_{g\in G}$ such that
$$
\tilde\alpha_{\ge}^{(g)} := \Ad[W_g] \circ  \alpha_{\ge}^{(g)}. 
$$
If $\Crossing_g(k)$ are crossing operators associated to $\al_{\geq}$ then crossing operators $\tilde \Crossing_g(k)$ associated to $\tilde \al_{\geq}$ must satisfy
$$
\Ad[\tilde \Crossing_g(h)] = \alpha^{(h)} \circ \tilde \alpha_{\ge}^{(\bar h g h)} \circ \left(\alpha^{(h)}\right)^{-1} \circ \left(\tilde\alpha_{\ge}^{(g)}\right)^{-1} = \Ad \left[\alpha^{(h)}(W_{\bar h g h}) \, \Crossing_g(h) \, W_g^* \right].
$$
Therefore
\begin{equation}
    \tilde \Crossing_g(h) = \varepsilon_g(h) \alpha^{(h)}\big(W_{\bar h g h}\big) \, \Crossing_g(h) \, W_g^*
\end{equation}
for some phase map  $\varepsilon: G \to U(1)[G]$. Let $\lambda$ be the twisted 2-cocycle corresponding to the crossing operators $\Crossing_g(k)$. By a straightforward computation we find that the twisted 2-cocycle $\tilde\lambda$ corresponding to the $\tilde \Crossing_g(k)$ is
$$
\tilde \lambda_g(k,l) = \dfrac{\varepsilon_{\bar kgk}(l) \varepsilon_g(k)}{\varepsilon_g(kl)} \lambda_g(k,l).
$$
They differ up to a twisted 2-coboundary, conform \eqref{eq:twisted2coboundary}, so $[\tilde \lambda] = [\lambda]$\footnote{
This construction immediately yields a way to turn $\lambda$'s which are coboundaries to 1, by setting $\tilde{\Crossing}(g,h) = \epsilon_g(h) \Crossing_g(h)$.
}. This shows that $\Lambda(\al)$ is well defined.

The proofs of items \ref{propitem:obstruction decoupled implies trivial} through \ref{propitem:obstruction constant on stable equivalence classes} are virtually identical to the corresponding proofs of items \ref{propitem:anomaly decoupled implies trivial} through \ref{propitem:anomaly constant on stable equivalence classes} in Appendix \ref{app:proof of anomaly proposition}. We do not repeat the details here.

\section{Proof of Proposition \ref{prop:[mu] is well defined}} \label{app:proof of mu well defined}

\noindent \textbf{Independence of the phases of the fusion operators.} Suppose $\al_{\geq}$ is a covariant right restriction of $\al$ with fusion operators $\Phi$ leading to
        $$ \al^{(k)} \big( \Phi( \bar k g k, \bar k h k) \big) = \mu_{g,h}(k) \times \Phi( g, h). $$
        The fusion operators are determined only up to phase, so we can use $\widetilde \Phi(g, h) = \xi(g, h) \Phi(g, h)$ instead, leading to
        $$ \al^{(k)} \big( \widetilde \Phi( \bar k g k, \bar k h k) \big) = \tilde \mu_{g,h}(k) \times \widetilde \Phi(g, h) $$
        with
        $$ \tilde \mu_{k}(g, h) = \frac{\xi_{\bar k g k, \bar k h k}}{\xi_{g, h}} \mu_k(g, h). $$
        Hence $\mu$ and $\tilde \mu$ differ by a twisted 1-coboundary \eqref{eq.1coboundary-mu}, thus $[\mu] = [\tilde \mu] \in \mathfrak K$. \\ 
        
\noindent \textbf{Independence of right restriction.} Let $\al_{\geq}$ and $\beta_{\geq}$ be covariant right restrictions of $\al$. Then there are unitaries $U(g)$ such that
        $ \beta_{\geq}^{(g)} = \Ad[ U(g) ] \circ \al_{\geq}^{(g)} $ and it follows from covariance that there are phases $c_g(k)$ such that
        $ \al^{(k)} \big( U(\bar k g k) \big) = c_g(k) \times U(g). $
        By computing $\al^{(kl)}\big( U(\overline{kl} g kl )  \big)$ in two different ways we find moreover that the $c_g(k)$ satisfy the twisted 1-cocycle law
        $ c_{g}(kl) = c_g(k) c_{\bar kg k}(l).$
        By choosing different phases $\widetilde U(g) = \eta(g) U(g)$ we get new
        $$ \tilde c_g(k) = \frac{\eta_{\bar k g k}}{\eta_g} c_g(k).$$
        Suppose $\al_{\geq}$ has fusion operators $\Phi$ leading to
        $ \al^{(k)} \big( \Phi( \bar k g k, \bar k h k ) \big) = \mu_{g,h}(k) \times \Phi(g, h). $
        Then $\beta_{\geq}$ has fusion operators
        $$ \widetilde \Phi(g, h) = U(g) \, \al_{\geq}^{(g)} \big( U(h) \big) \, \Phi(g, h) \, U(gh)^* $$
        and we compute (using covariance of $\al_{\geq}$)
        \begin{align*}
            \al^{(k)} \big( \widetilde \Phi(\bar k g k, \bar k h k) \big) &= \al^{(k)} \big( U( \bar k g k ) \big) \, \al^{(g)}_{\geq} \big(  \al^{(k)} \big( U(\bar k h k) \big) \big) \, \al^{(k)} \big( \Phi( \bar k g k, \bar k h k ) \big) \, \al^{(k)} \big( U(\bar k gh k)^* \big) \\
            &= \frac{c_g(k) c_h(k)}{c_{gh}(k)} \mu_{g,h}(k) \times U(g) \, \al^{(g)}_{\geq} \big( U(h) \big) \, \Phi(g, h) \, U(gh)^* \\
            &= \frac{c_g(k) c_h(k)}{c_{gh}(k)} \mu_{g,h}(k) \times \widetilde \Phi(g, h).
        \end{align*}
        This shows that
        $$ \tilde \mu_{g,h}(k) = \frac{c_g(k) c_h(k)}{c_{gh}(k)} \mu_{g,h}(k)$$
        for some representative $c$ of a class $[c] \in H^1(G,U(1)[G])$. This shows $[\tilde \mu] = [\mu] \in \mathfrak K$. \\

The proof of items \ref{mu-item1} to \ref{mu-item4} are similar to the corresponding proofs of items \ref{propitem:anomaly decoupled implies trivial} through \ref{propitem:anomaly constant on stable equivalence classes} in Appendix \ref{app:proof of anomaly proposition}.
\section{Computation of obstructions to covariant right restrictions} \label{app:general_covariance}

Let $\al$ be the symmetry with anomaly $[\omega]$ constructed in Section \ref{sec:examples}. Let $\al_{\geq}$ be the right restriction at $a \in \Z$ consisting of all gates defining $\al$ that are supported in $[a, \infty)$. By decomposing
$$
\al^{(k)} = \al_{\leq}^{(k)} \circ \al_{\geq}^{(k)} \circ \Ad[V^{(k)}_{a-1, a}]
$$
we find that the associated crossing operators satisfy
\begin{align*}
    \Ad[ \Crossing_g(k) ] &= \al^{(k)} \circ \al_{\geq}^{(\bar k g k)} \circ \al^{(\bar k)} \circ \big( \al_{\geq}^{(g)} \big)^{-1} \\
    &= \Ad[L_{a-1}^{(k)}] \circ \al_{\geq}^{(k)} \circ \Ad[V_{a-1, a}^{(k)}] \circ \al_{\geq}^{(\bar k g k)} \circ \Ad[V_{a-1, a}^{(k)}]^* \circ \big( \al_{\geq}^{(g)} \circ \al_{\geq}^{(k)} \big)^{-1} \circ \Ad[L_{a-1}^{(k)}]^* \\
    &= \Ad[L_{a-1}^{(k)}] \circ \al_{\geq}^{(k)} \circ \left( \Ad[V_{a-1, a}^{(k)}] \circ \al_{\geq}^{(\bar k g k)} \circ \Ad[V_{a-1, a}^{(k)}]^* \circ \big(\al_{\geq}^{(\bar k g k)} \big)^{-1} \right) \\
    &\quad\quad\quad \circ \al_{\geq}^{(\bar k g k)} \circ \big( \Ad[\Fusion_a(g, k)] \circ \al_{\geq}^{(gk)} \big)^{-1} \circ \Ad[L_{a-1}^{(k)}]^*
\end{align*}
where $\Fusion_a$ are the fusion operators associated to $\al_{\geq}$ as in Lemma \ref{lem:example fusion operators}. The commutator expression is given by
$$
\Ad[V_{a-1, a}^{(k)}] \circ \al_{\geq}^{(\bar k g k)} \circ \Ad[V_{a-1, a}^{(k)}]^* \circ \big(\al_{\geq}^{(\bar k g k)} \big)^{-1} = \Ad[W_a(g, k)]
$$
where the unitary $W_a(g, k)$ is supported on $\{a-1, a\}$ and is diagonal in the group basis:
$$
W_a(g, k) | g_{a-1}, g_a \rangle = \frac{\omega(k, g_a, \bar g_a g_{a-1})}{\omega(k, \bar k \bar g k g_a, \bar g_a \bar k g k g_{a-1})} \, | g_{a-1}, g_a \rangle.
$$
Commuting $W_a(g, k)$ with $\Ad[L_{a-1}^{(k)}] \circ \al_{\geq}^{(k)}$ we obtain
\begin{align*}
\Ad[\Crossing_g(k)] &= \Ad[W'_a(g, k)] \circ \al_{\geq}^{(k)} \circ \al_{\geq}^{(\bar k g k)} \circ \big( \al_{\geq}^{(gk)} \big)^{-1} \circ \Ad[\Fusion_a(g, k)^*] \\
&= \Ad[W'_a(g, k)] \circ \Ad[ \Fusion_a(k, \bar k g k) ] 
\circ \Ad[\Fusion_a(g,k)^*] \\
&= \Ad \big[  W'_a(g, k) \Fusion_a(k, \bar k g k) \Fusion_a(g, k)^* \big]
\end{align*}
with
$$
W'_a(g, k) | g_{a-1}, g_a \rangle = \frac{\omega(k, \bar k g_a, \bar g_a g_{a-1})}{\omega(k, \bar k \bar g g_a, \bar g_a g g_{a-1})} \, | g_{a-1}, g_a \rangle.
$$
We can therefore take $\Crossing_g(k) = W'_a(g, k) \Fusion_a(k, \bar k g k) \Fusion_a(g, k)^*$, which is supported on $\{a-1, a\}$.

Let us now compute the action of $\al^{(k)} \big( \Crossing_{\bar k g k}(l) \big) \, \Crossing_g(k) \Crossing_g(kl)^*$ on a product state $| (g_i) \rangle$ in the group basis. Noting that $\al^{(k)}$ makes $\Crossing_{\bar k g k}(l)$ act on the product state $|(\bar k g_i) \rangle$, we find 
\begin{align*}
    \al^{(k)} &\big( \Crossing_{\bar k g k}(l) \big) \, \Crossing_g(k) \Crossing_g(kl)^* | (g_i) \rangle = \left( \frac{  \omega(l, \bar l \, \bar k g_a, \bar g_a g_{a-1}) \, \omega(l, \bar l \, \bar k g k l, \bar l \, \bar k \, \bar g g_a)  }{  \omega(l, \bar l \, \bar k \, \bar g g_a, \bar g_a g g_{a-1}) \, \omega(\bar k g k, l , \bar l \, \bar k \, \bar g g_a)  } \right) \\
    &\times \left(   \frac{  \omega(k, \bar k g_a, \bar g_a g_{a-1}) \, \omega( k, \bar k g k, \bar k \, \bar g g_a )  }{  \omega(k, \bar k \, \bar g g_a, \bar g_a g g_{a-1}) \, \omega( g, k, \bar k \, \bar g \, g_a )  }  \right) 
    \times \left(  \frac{  \omega( kl, \bar l \, \bar k g_a, \bar g_a g_{a-1} ) \, \omega(kl, \bar l \, \bar k g k l, \bar l \, \bar k \, \bar g g_a )  }{  \omega(kl, \bar l \, \bar k \, \bar g g_a, \bar g_a g g_{a-1}) \, \omega( g, kl, \bar l \, \bar k \, \bar g g_a )  }  \right)^{-1}  \, |(g_i) \rangle \\
    \intertext{Successively applying 3-cocycle relations for elements $(k, l, \bar l \, \bar k g_a, \bar g_a g_{a-1})$, then $(k, l, \bar l \, \bar k \, \bar g g_a, \bar g_a g g_{a-1})$, then $(k, l, \bar l \, \bar k g k l, \bar l \, \bar k \, \bar g g_a)$, then $(k, \bar k g k, l, \bar l \, \bar k \, \bar g g_a)$, and finally $(g, k, l, \bar l \, \bar k \, \bar g \, g_a)$ this becomes}
    &= \frac{  \omega(k, \bar k g k, l)  }{ \omega(g, k, l) \, \omega(k, l, \bar l \, \bar k g k l)   } | (g_i) \rangle = \tau(\omega)_g(k, l)^{-1} \, | (g_i) \rangle
\end{align*}
where $\tau(\omega)$ is the \emph{slant product} of $\omega$, see Eq. \eqref{eq:slant product}. This shows that
\begin{proposition} \label{prop:obstruction to covariant right restriction as function of anomaly}
    For any symmetry $\al$ the obstruction $\Lambda(\al)$ to covariant right restrictions is a function of its anomaly $\ano(\al) = [\omega]$ given by
    $$
    \Lambda(\al) = [ \tau(\omega) ]^{-1}
    $$
    where $\tau$ is the slant product, see Eq. \eqref{eq:slant product}.
\end{proposition}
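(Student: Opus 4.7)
The plan is to combine the explicit computation carried out in the paragraphs immediately preceding the proposition statement with the main classification result, Theorem \ref{thm:classification}. In effect, those paragraphs have already established the proposition for the distinguished symmetry $\al_\omega$ constructed in Section \ref{sec:examples}, which has anomaly $[\omega]$. There one can choose the crossing operators to be $\Crossing_g(k) = W_a(g,k)\, \Fusion_a(k, \bar k g k)\, \Fusion_a(g,k)^*$ supported on the two sites $\{a-1, a\}$, and several successive applications of the 3-cocycle identity collapse the phase $\al^{(k)}\bigl(\Crossing_{\bar k g k}(l)\bigr)\, \Crossing_g(k)\, \Crossing_g(kl)^*$ to the diagonal scalar $\tau(\omega)_g(k,l)^{-1}$, so that $\Lambda(\al_\omega) = [\tau(\omega)]^{-1}$.

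To extend this to an arbitrary symmetry $\al$ with $\ano(\al) = [\omega]$, I would invoke Theorem \ref{thm:classification}: because the anomaly is a complete invariant for stable equivalence, $\al \sim \al_\omega$. Applying then item \ref{propitem:obstruction constant on stable equivalence classes} of Proposition \ref{prop:obstruction to covariant right restriction}, which states that $\Lambda$ is constant on stable equivalence classes, yields $\Lambda(\al) = \Lambda(\al_\omega) = [\tau(\omega)]^{-1}$. This is the entire content of the proof: the explicit example carries the geometric/combinatorial information, and the main theorem transports the result to all symmetries with a given anomaly class.

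The main obstacle is the bookkeeping inside the explicit calculation for $\al_\omega$: one must verify that the specific sequence of 3-cocycle reductions really produces the slant product formula \eqref{eq:slant product}, and not some cohomologous but syntactically different expression. Since the fusion operators from Lemma \ref{lem:example fusion operators} and the auxiliary diagonal unitaries $W_a(g,k)$, $W'_a(g,k)$ are all given by explicit scalar phases in the group basis, this amounts to choosing, at each step, the correct quadruple of group elements on which to apply $\delta\omega = 1$. A secondary, conceptually cleaner alternative would be to bypass Theorem \ref{thm:classification} by checking multiplicativity of $\tau \circ \ano$ under stacking and its invariance under $\sim_0$ directly; but given that the classification theorem is already available at this point in the paper, the route through $\al_\omega$ is the most economical.
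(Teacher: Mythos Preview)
Your proposal is correct and follows exactly the same approach as the paper's proof: verify the identity $\Lambda(\al_\omega) = [\tau(\omega)]^{-1}$ on the explicit model symmetries of Section~\ref{sec:examples} using the computation preceding the proposition, then transport it to an arbitrary $\al$ with the same anomaly via Theorem~\ref{thm:classification} and the stability of $\Lambda$ under $\sim$ from Proposition~\ref{prop:obstruction to covariant right restriction}.
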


\begin{proof}
    We verified the claimed equality for the example symmetries $\al_{\omega}$ with arbitrary anomaly $[\omega]$ constructed in Section \ref{sec:examples}. If $\al$ is an arbitrary symmetry with anomaly $[\omega]$ then by Theorem \ref{thm:classification} it is stably equivalent to $\al_{\omega}$. By Proposition \ref{prop:obstruction to covariant right restriction} the obstruction $\Lambda$ is constant on stable equivalence classes. We conclude that $\Lambda(\al) = \Lambda(\al_{\omega}) = [ \tau(\omega) ]^{-1}$.
\end{proof}
\section{Stable equivalence is necessary} \label{app:necessity of ancillas}

We describe a $\Z_2$ symmetry with trivial anomaly which is not equivalent to a decoupled symmetry \cite{zhang_long-range_2024}. This shows Theorem \ref{thm:classification} would not hold true if we had replaced stable equivalence" by "equivalence".

Consider the spin chain $\caA$ with $\caA_j \simeq \End(\C^2)$ for all sites $j \in \Z$. For each site $j$ let $Z_j$ be the $Z$-Pauli matrix, i.e.\  
$
\begin{bmatrix}
1 & 0 \\
0 & -1
\end{bmatrix}
$
acting 
at site $j$ and let $P^{\downarrow}_j = (\I - Z_j)/2$. Write $CZ_{j, j+1} = \I - 2 P_j^{\downarrow} P_{j+1}^{\downarrow}$ for the controlled $Z$ gate acting on sites $j$ and $j+1$. Note that all these gates commute with each other.

For any finite interval $I = [a, b]$ define $U_I = \prod_{j = a}^{b-1} CZ_{j, j+1}$. Then $\al^{(-1)} := \lim_{a \uparrow \infty} \Ad[ U_{[-a, a]} ]$ defines the non-trivial component of a $\Z_2$-symmetry $\al$ with trivial anomaly.

Suppose it were possible to decouple $\al$ by a FDQC, then in particular there would exist a local unitary $V$ such that for all $a \in \N$ large enough we have $V U_{[-a, a]} V^* = U^{L}_a U^{R}_a$ with $U_a^{L} \in \caA_{[-a, 0]}$ and $U^{R}_a \in \caA_{[1, a]}$. 

To see that this is impossible, one first verifies by explicit computation that
$$
\Tr_{[0, 1]} \left\lbrace U_{[-a, a]} \right\rbrace = 2 \times U_{[-a, -1]} \times CZ_{-1, 2} \times U_{[3, a]}
$$
for all $a > 3$, where $\Tr_J$ is the partial trace over $\caA_J$ for any finite $J \subset \Z$. This is again a product of controlled Z's so by an induction argument one obtains
\begin{equation} \label{eq:partial trace}
    \Tr_{[-b, b+1]} \left\lbrace U_{[-a, a]} \right\rbrace = 2^b \times U_{[-a, -(b+1)]} \times CZ_{-{b+1}, b+2} \times U_{[b+2, a]}
\end{equation}
for all $a > b$. We can now take $a$ and $b$ large enough so that $V \in \caA_{[-b, b]}$. By invariance of the partial trace under unitary conjugation we have $\Tr_{[-b, b]} \left\lbrace V U_{[-a, a]} V^* \right\rbrace = \Tr_{[-b, b+1]} \left\lbrace U_{[-a, a]} \right\rbrace$ is also given by Eq. \eqref{eq:partial trace}. In contrast, if $V U_{[-a, a]} V^* = U^{L}_a U^{R}_a$ then $\Tr_{[-b, b]} \left\lbrace V U_{[-a, a]} V^* \right\rbrace = \Tr_{[-b, 0]} \big\lbrace U_a^{L} \big\rbrace \times \Tr_{[1, b+1]} \big\lbrace U_a^{R} \big\rbrace$, which is incompatible with the form given by Eq. \eqref{eq:partial trace}.

\bibliographystyle{unsrturl}
\bibliography{bib}

\end{document}